\def\RSsubtxt{section~}\newref{sub}{name = \RSsubtxt}}
\def\RSthmtxt{theorem~}\newref{thm}{name = \RSthmtxt}}
\def\RSlemtxt{lemma~}\newref{lem}{name = \RSlemtxt}}
\numberwithin{equation}{section}
\theoremstyle{plain}
\newtheorem{thm}{\protect\theoremname}
  \theoremstyle{definition}
  \newtheorem{defn}[thm]{\protect\definitionname}
  \providecommand{\definitionname}{Definition}
\providecommand{\theoremname}{Theorem}
\begin{document}

\title{Truthful and Faithful Monetary Policy for a Stablecoin Conducted
by a Decentralised, Encrypted Artificial Intelligence}

\author{David Cerezo Sánchez\\
{\small{}david@calctopia.com}}
\maketitle
\begin{abstract}
The Holy Grail of a decentralised stablecoin is achieved on rigorous
mathematical frameworks, obtaining multiple advantageous proofs: stability,
convergence, truthfulness, faithfulness, and malicious-security. These
properties could only be attained by the novel and interdisciplinary
combination of previously unrelated fields: model predictive control,
deep learning, alternating direction method of multipliers (consensus-ADMM),
mechanism design, secure multi-party computation, and zero-knowledge
proofs. For the first time, this paper proves:

- the feasibility of decentralising the central bank while securely
preserving its independence in a decentralised computation setting

- the benefits for price stability of combining mechanism design,
provable security, and control theory, unlike the heuristics of previous
stablecoins

- the implementation of complex monetary policies on a stablecoin,
equivalent to ones used by central banks and beyond the current fixed
rules of cryptocurrencies that hinder their price stability

- methods to circumvent the impossibilities of Guaranteed Output Delivery
(G.O.D.) and fairness: standing on truthfulness and faithfulness,
we reach G.O.D. and fairness under the assumption of rational parties

As a corollary, a decentralised artificial intelligence is able to
conduct the monetary policy of a stablecoin, minimising human intervention.
\end{abstract}

\section{Introduction}

The Holy Grail of a stablecoin\cite{holyGrailCrypto}, an asset with
all the benefits of decentralisation but none of the volatility, remains
the most elusive single-horned creature of the cryptocurrency market.
In fact, price stability is the most wanted feature of a cryptocurrency:
in a recent survey\cite{privacyMoney}, hedging against depreciation
risk (i.e., price stability) was the most important attribute and
it has a much higher feature than anonymity (40\% vs. 1\%) or illiquidity
risk; however, subjects of the survey assigned to the anonymous medium-of-payment
a value on average only 1.44\% higher than to the non-anonymous medium-of-payment.

In monetary economics, monetary policy rules refer to a set of rule
of thumb that the central bank is committed to, so it can maintain
the price stability of a currency (Taylor rule, McCallum rule, inflation
targeting, fixed exchange rate targeting, nominal income targeting,
etc). However, the fixed rules for the emission of most cryptocurrencies\cite{cryptoMonetaryPolicies}
cannot maintain price stability: the inflexibility of their emission
rules and their inelasticity of supply provoke part of the high volatility
of the cryptocurrency market; their lack of good monetary rules preclude
their wide used as money\cite{bitcoinMonetaryRule} as they lack clear
a clear focus on monetary equilibrium; instead, they feature technical
rules for stabilising the difficulty of mining\cite{bitcoinDifficulty},
but not monetary rules. Stablecoins\cite{surveyStablecoin,ECBstablecoin,stabilizationCryptocurrencies}
were born to explicitly solve the volatility problem of cryptocurrencies:
however, their current formulation relies on heuristics\cite{dai,terraMoney,nubits,powStability,powPriceStabilization}
without a general mathematical framework within which advantageous
properties can be mathematically proven such as stability and convergence.
Stablecoins lacking stability regimes and/or convergence guarantees
suffer from the instabilities of unstable domains and deleveraging
spirals that cause illiquidity during crises\cite{instabilityBlockchain}:
these shortcomings cause price volatility, making cryptocurrencies
unusable as short-term stores of value and means of payment, increasing
barriers to adoption.

This paper introduces the novel combination of multiple mathematical
frameworks in order to design a decentralised stablecoin by inheriting
multiple useful properties of said frameworks: stability, convergence,
truthfulness, faithfulness, and malicious-security.

\paragraph{Contributions}

The main and novel contributions are:
\begin{itemize}
\item first formal treatment of decentralised stablecoin within which multiple
mathematical properties can be proven: stability, convergence, truthfulness,
faithfulness, and malicious-security.
\item dynamical models of economic systems: currency prediction with deep
learning, and stabilisation and emission of stablecoins.
\item decomposition of Model Predictive Controllers with consensus-ADMM
for their implementation in decentralised networks (i.e., blockchains).
\item protection against malicious adversaries in said decentralised networks.
\item from mechanism design, proofs to guarantee truthfulness for all the
parties involved and faithfulness of the execution for the decentralised
implementation.
\end{itemize}

\section{Related}

Previous cryptocurrencies with a controlled money suppy similar to
a central bank currency were centralised\cite{centrallyBankedCryptocurrencies,implementingRSCoin,UFCBCoin,cryptoeprint:2018:412}:
for first time, this paper solves the decentralisation of the monetary
policy, achieving a fully decentralised cryptocurrency when combined
with a public permissionless blockchain.

Most stablecoins are centralised: the few ones that are decentralised
(e.g., \cite{dai}), rely on heuristics without a general mathematical
framework within which advantageous properties can be mathematically
proven such as stability and convergence.

\section{Background}

This section provides a brief introduction to the main technologies
of the decentralised stablecoin: blockchains, model predictive control,
alternating direction method of multipliers (ADMM), mechanism design,
secure multi-party computation, and zero-knowledge proofs. A high-level
and conceptual rendering of the interrelationship between these techniques
can be found in Figure \ref{fig:Combination-techniques}.

\paragraph{Blockchains}

A blockchain is a distributed ledger that stores a growing list of
unmodifiable records called blocks that are linked to previous blocks.
Blockchains can be used to make online secure transactions, authenticated
by the collaboration of the P2P nodes allowing participants to verify
and audit transactions. Blockchains can be classified according to
their openness. Open, permissionless networks don't have access controls
and reach decentralised consensus through costly Proof-of-Work calculations
over the most recently appended data by miners. Permissioned blockchains
have identity systems to limit participation and do not rely on Proofs-of-Work.
Blockchain-based smart contracts are computer programs executed by
the nodes and implementing self-enforced contracts. They are usually
executed by all or many nodes (\textit{on-chain smart contracts}),
thus their code must be designed to minimise execution costs. Lately,
off-chain smart contracts frameworks are being developed that allow
the execution of more complex computational processes.

\paragraph{Model Predictive Control}

Advanced method of process control including constraint satisfaction:
a dynamical model of a system is used to predict the future evolution
of state trajectories while bounding the input to an admissible set
of values determined by a set of constraints, in order to optimise
the control signal and account for possible violation of the state
trajectories; at every time step, the optimal sequence over $N$ steps
in determined but only the first element is implemented. Model Predictive
Control is widely used in industrial settings, and its large literature
contains proofs of feasibility, stability, convergence, robustness
and many other useful properties that could be reused in many other
settings.

In this paper, multiple dynamic systems for Model Predictive Control
will be introduced: \nameref{sub:Model-AlgorithmicStablecoin} and
\nameref{sub:Economic-Model-CollaterisedStablecoin}; \nameref{sub:Economic-Model-CentralBankedCurrency};
\nameref{sec:Decentralised-Currency-Prediction}; \nameref{sec:Distributed-Implementation-Stablecoins};
and \nameref{sec:Auction-Mechanism-Issuing-Stablecoins}.

\paragraph{Alternating Direction Method of Multipliers (ADMM)}

Class of algorithms to solve distributed convex optimisation problems
by breaking them into smaller pieces, and distributing between multiple
parties\cite{ADMMdistrStats}. Itself a variant of the augmented Lagrangian
methods that use partial updates for the dual variable, it requires
exchanges of information between neighbors for every iteration until
converging to the result.

In this paper, multiple optimisation problems expressed in Model Predictive
Control will be decomposed with ADMM techniques in order to decentralise
their computation between multiple parties: \nameref{sec:Decentralised-Currency-Prediction};
\nameref{sec:Distributed-Implementation-Stablecoins}; and \nameref{sec:Distributed-Implementation-Auction-Mechanism}.

\paragraph{Mechanism Design}

Also called ``reverse game theory'', is a field of game theory and
economics in which a ``game designer'' chooses the game structure
where players act rationally and engineers incentives or economic
mechanisms, toward desired objectives pursuing a predetermined game's
outcome.

In this paper, parties truthfully report private information \ref{thm:strategyproof-auction}
(strategy-proofness) and faithfully execute a protocol (definition
\defref{(Faithful-Implementation)}, \thmref{faithful-distributed},
\thmref{faithful-stablecoins}).

\paragraph{Secure Multi-Party Computation}

Protocols for secure multi-party computation (MPC) enable multiple
parties to jointly compute a function over inputs without disclosing
said inputs (i.e., secure distributed computation). MPC protocols
usually aim to at least satisfy the conditions of inputs privacy (i.e.,
the only information that can be inferred about private inputs is
whatever can be inferred from the output of the function alone) and
correctness (adversarial parties should not be able to force honest
parties to output an incorrect result). Multiple security models are
available: semi-honest, where corrupted parties are passive adversaries
that do not deviate from the protocol; covert, where adversaries may
deviate arbitrarily from the protocol specification in an attempt
to cheat, but do not wish to be ``caught'' doing so ; and malicious
security, where corrupted parties may arbitrarily deviate from the
protocol. 

We utilise the framework SPDZ\cite{cryptoeprint:2011:535}, a multi-party
protocol with malicious security.

\paragraph{Zero-Knowledge Proofs}

Zero-knowledge proofs are proofs that prove that a certain statement
is true and nothing else, without revealing the prover's secret for
this statement. Additionally, zero-knowledge proofs of knowledge also
prove that the prover indeed knows the secret.

In this paper, zero-knowledge proofs are used to prove that a local
computation was executed correctly.

\noindent 
\begin{figure}[H]
\begin{centering}
\includegraphics[scale=0.4]{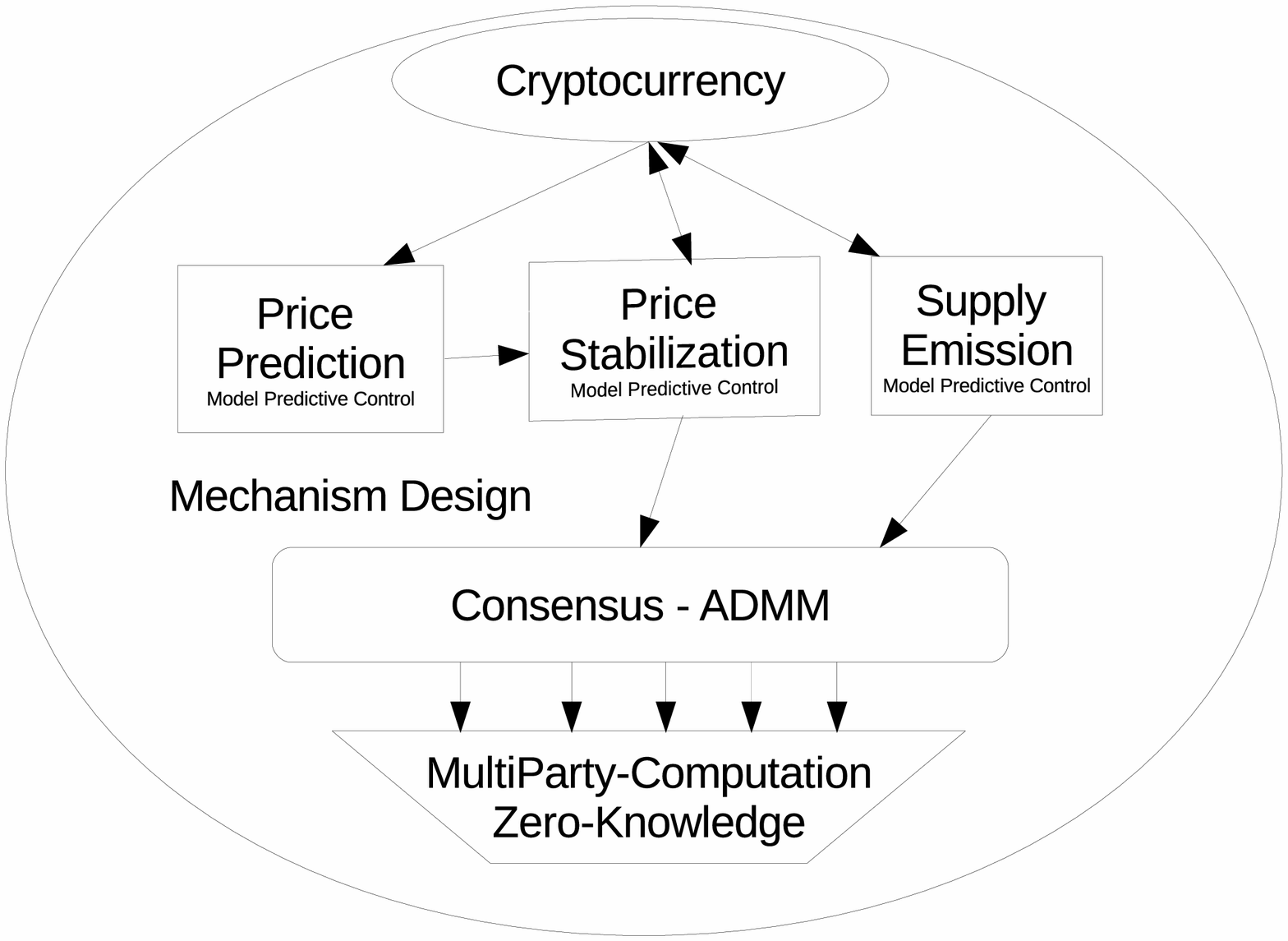}
\par\end{centering}

\protect\caption{\label{fig:Combination-techniques}High-level rendering of the combination
of techniques}
\end{figure}

\section{Economic Models}

We formalise\label{sec:basicModelCryptocurrency} a basic model of
a cryptocurrency%
\footnote{DISCLAIMER: the simplified models in the present paper are only for
illustrative purposes. Complex and parameterised models are needed
for real-world settings.%
} issuing variable block rewards and periodically auctioning a variable
amount of unissued coins from its uncapped and dynamically adjusted
supply: all these three variables are constantly adjusted by a dynamical
system using Stochastic Model Predictive Control in order to maintain
price stability (i.e., controlled variables).

Let $t\in T=\left\{ 1,\ldots,T\right\} $ denote the time slots used
by the blockchain. Let $S_{max}(t)$ denote the maximum supply of
a cryptocurrency, $S_{outstanding}(t)$ the supply that is visible
on-chain, $S_{initial}$ the initially issued supply by an initial
offering event (i.e., an initial auction) and $S_{unissued}(t)$ is
the amount of cryptocurrency yet to be issued. Then, we have:
\begin{eqnarray}
0 & \leq & S_{initial}\leq S_{outstanding}(t)\leq S_{max}(t),\forall t\in T,\label{eq:blockchainFirstConstraint}\\
S_{initial} & = & S_{outstanding}(1),\\
S_{max}(t) & = & S_{unissued}(t)+S_{outstanding}(t).
\end{eqnarray}
Periodically, miners are being rewarded for successfully processing
blocks with a variable amount of block rewards, $BR(t)$:
\begin{eqnarray}
S_{outstanding}(t+1) & = & S_{outstanding}(t)+BR(t),\forall t\in T,\\
S_{unissued}(t+1) & = & S_{unissued}(t)-BR(t),\\
0\leq & BR(t)\leq & BR_{max}.
\end{eqnarray}
Auctions are carried out to issue coins from the pool of $S_{unissued}(t)$
, each auction releasing a variable amount of auctioned coins, $AUC_{coins}(t)$,
with $x_{i}(t)$ denoting the amount of coins demanded by participant
$i$, $\forall t\in T$:
\begin{eqnarray}
S_{outstanding}(t+1) & = & S_{outstanding}(t)+AUC_{coins}(t),\\
S_{unissued}(t+1) & = & S_{unissued}(t)-AUC_{coins}(t),\\
0\leq & AUC_{coins}(t)\leq & AUC_{max},\label{eq:blockchainFirstHalfConstraint}\\
x_{i}^{min}(t)\leq & x_{i}\left(t\right) & \leq x_{i}^{max}\left(t\right),\label{eq:userConstraint}\\
\sum_{i}x_{i}(t) & -AUC_{coins}(t) & =0.\label{eq:auctionConstraint}
\end{eqnarray}
Let $P(t)$ denote the market price of a coin at time $t$ in a currency
(i.e., the number of cryptocurrency coins that one unit of currency
-EUR, JPY, USD- will buy at time $t$) and we adopt a geometric Brownian
motion model:
\begin{eqnarray}
\Delta P(t) & = & P(t+1)-P(t),\label{eq:blockchainSecondHalfConstraint}\\
dP(t) & = & \mu P(t)dt+\sigma P(t)dW_{t},
\end{eqnarray}
where $W_{t}$ is a Weiner process. Let $\left\{ S_{max}(t),BR(t),AUC_{coins}(t)\right\} $
be the controlled variables. Thus, in order to maintain price stability,
these controlled variables will expand when the price is increasing
and contract when the price is lowering:
\begin{eqnarray}
S_{max}(t) & \sim & S_{max}(t-1)\cdot\frac{P(t)}{P(t-1)},\\
BR(t) & \sim & BR(t-1)\cdot\frac{P(t)}{P(t-1)},\\
AUC_{coins}(t) & \sim & AUC_{coins}(t-1)\cdot\frac{P(t)}{P(t-1)}.\label{eq:blockchainLastConstraint}
\end{eqnarray}

\subsection{Economic Model for an Algorithmic Stablecoin\label{sub:Model-AlgorithmicStablecoin}}

Consider the stochastic linear state space system in the form
\begin{eqnarray}
x_{k+1} & = & Ax_{k}+Bu_{k}+w_{k}\label{eq:stochasticLinearSystem}\\
y_{k} & = & C_{y}x_{k}+v_{k}\\
z_{k} & = & C_{z}x_{k}
\end{eqnarray}
where $A,B,C_{y},C_{z}$ are state space matrices, $x_{k}\in\mathbb{R}^{n_{x}}$
is the state vector, $u_{k}\in\mathbb{R}^{n_{u}}$ is the input vector,
$y_{k}\in\mathbb{R}^{n_{y}}$ is the output vector, $z_{k}\in\mathbb{R}^{n_{z}}$
is the vector of controlled variables, $w_{k}\in\mathbb{R}^{n_{x}}$
is the noise vector of the process, and $v_{k}\in\mathbb{R}^{n_{y}}$
is the vector of measurement noise. Let $N$ be the length of the
prediction and receding horizon control and define the vectors
\[
N_{i}=\left\{ 0+i,1+i,\ldots,N-1+i\right\} 
\]
\begin{eqnarray*}
 & u=\left[\begin{array}{cccc}
u_{0}^{T} & u_{1}^{T} & \ldots & u_{N-1}^{T}\end{array}\right]^{T}, & x=\left[\begin{array}{cccc}
x_{1}^{T} & x_{2}^{T} & \ldots & x_{N}^{T}\end{array}\right]^{T},\\
 & z=\left[\begin{array}{cccc}
z_{1}^{T} & z_{2}^{T} & \ldots & z_{N}^{T}\end{array}\right]^{T}, & w=\left[\begin{array}{cccc}
w_{1}^{T} & w_{2}^{T} & \ldots & w_{N}^{T}\end{array}\right]^{T}
\end{eqnarray*}
Define the following exchange rate function measuring the cumulative
exchange rate between the price of a currency (e.g., EUR, JPY, USD)
and a stablecoin in the stochastic state space system \ref{eq:stochasticLinearSystem}
in the following $N$ time steps,
\begin{equation}
\begin{array}{c}
\psi_{xch}\left(u;\bar{x}_{0},w\right)=\left\{ \phi\left(u,x,z\right)\left|x_{0}=\bar{x}_{0},\right.\right.\\
\left.x_{k+1}=Ax_{k}+Bu_{k}+w_{k},z_{k+1}=Cx_{k+1},k\in N_{0}\right\} ,
\end{array}
\end{equation}
Let $P\left(t\right)$ be the spot price of the stablecoin cryptocurrency
denominated in a currency (e.g., EUR, JPY, USD). Then, the cumulative
exchange rate at time $t$ is
\begin{equation}
\phi\left(u,x,z\right)=\sum_{t=1}^{N}\left(P\left(t+1\right)\right)\label{eq:exchangeRateFunction}
\end{equation}
Following a criterion of social welfare maximisation, users and holders
of the stablecoin prefer to minimise the volatility of the exchange
rate, with the following equation describing the minimisation problem
\begin{equation}
\underset{\forall t}{\mbox{minimise }}\lambda E\left[\psi_{xch}\right]+\left(1-\lambda\right)\mbox{Var}\left[\psi_{xch}\right]\label{eq:minimizeAlgorithmicStablecoin}
\end{equation}
with $\lambda\in\left[0,1\right]$ determines the trade-off between
the expected exchange rate and the exchange rate variance.

\subsection{Economic Model for a Collaterised Stablecoin\label{sub:Economic-Model-CollaterisedStablecoin}}

We extend the basic model of a cryptocurrency (\ref{sec:basicModelCryptocurrency}),
with a reserve $R\left(t\right)$ backing every issued coin with $\lambda$
units of the reserve asset: for example, $\lambda=1$ for a 1 to 1
peg against a currency (e.g., EUR, JPY, USD), and $\lambda>1$ for
an overcollaterised stablecoin backed with other cryptocurrencies.
Then, we have:
\begin{eqnarray}
R\left(t\right) & = & \lambda\cdot S_{outstanding}\left(t\right),\\
0\leq & R\left(t\right) & \leq\lambda\cdot S_{max}\left(t\right),
\end{eqnarray}
In order to maintain price stability, $\lambda\left(t\right)$ could
also be a controlled variable that will increase when the price is
lowering and contract when the price is increasing:
\begin{eqnarray}
\lambda\left(t\right) & \sim & \lambda(t-1)\cdot\frac{P(t-1)}{P(t)},\\
1\leq & \lambda\left(t\right) & \leq\lambda_{max}.
\end{eqnarray}

\subsection{Economic Model for a Central-Banked Currency\label{sub:Economic-Model-CentralBankedCurrency}}

The framework and results of this paper could also be applied to the
monetary policy conducted by central banks, just by representing their
models in the framework of Model Predictive Control in a way similar
to the previous \nameref{sub:Model-AlgorithmicStablecoin}. 

The Taylor rule\cite{taylorRule} is an approximation of the responsiveness
of the nominal short-term interest rate $i_{t}$ as applied by the
central bank to changes in inflation $\pi$ and output $y$, according
to the following formula
\begin{equation}
i_{t}=\varphi_{y}\left(y_{t}-y^{*}\right)+\varphi_{\pi}\left(\pi_{t}-\pi^{*}\right)+\pi^{*}+r^{*}\label{eq:taylorRule}
\end{equation}
where a standard model describes the evolution of the economy
\begin{eqnarray}
\pi_{t+1} & = & \pi_{t}+\alpha y_{t}+e_{t+1}^{\pi},\label{eq:taylorModelInflation}\\
y_{t+1} & = & \rho y_{t}-\zeta\left(i_{t}-\pi_{t}\right)+e_{t+1}^{y},\label{eq:taylorModelOutput}
\end{eqnarray}
describing the dynamic relationship between the manipulated input
$i_{t}$ and the two controlled outputs $y_{t}$ and $\pi_{t}$. At
equilibrium, we obtain $i_{t}=i^{*}$, $\pi_{t}=\pi^{*}$, $y_{t}=0$
and $r^{*}=i^{*}-\pi^{*}$. Equations \ref{eq:taylorModelInflation}
and \ref{eq:taylorModelOutput} can be rewritten in the terms of deviation
variables from the equilibrium point, as
\begin{equation}
x_{t+1}=Ax_{t}+Bu_{t}+\epsilon_{t+1},\label{eq:stateSpaceTaylorModel}
\end{equation}
where
\begin{eqnarray}
x & = & \left[\begin{array}{cc}
y & -y^{*}\\
\pi & -\pi^{*}
\end{array}\right],u=\Delta i=i-i^{*},\epsilon=\left[\begin{array}{c}
e^{y}\\
e^{\pi}
\end{array}\right],\\
A & = & \left[\begin{array}{cc}
\rho & \zeta\\
\alpha & 1
\end{array}\right],\\
B & = & \left[\begin{array}{c}
-\zeta\\
0
\end{array}\right]
\end{eqnarray}
The cost function of the central bank is of the standard optimal control
form
\begin{equation}
\sum_{k=0}^{\alpha}\beta^{k}L\left(\hat{x}_{t+k\left|t\right.},u_{t+k\left|t\right.}\right)\label{eq:costFunctionCentralBank}
\end{equation}
where $\beta\in\left(0,1\right)$ is the discount factor, $\hat{x}_{t+k\left|t\right.}$
is the expected value of $x$ at time $t+x$ using all information
available at time $t$ and model \ref{eq:stateSpaceTaylorModel};
$u_{t+k\left|t\right.}$ is the input value at time $t+k$ decided
on at time $t$; and the $M$ function is usually defined as
\begin{equation}
M\left(\hat{x}_{t+k\left|t\right.},u_{t+k\left|t\right.}\right)=\hat{x}_{t+k\left|t\right.}^{T}Q\hat{x}_{t+k\left|t\right.}+R^{2}u_{t+k\left|t\right.}^{2}\label{eq:MfunctionCentralBank}
\end{equation}
with $R^{2}\geq0$ and $Q\succeq0$. The previous equations \ref{eq:costFunctionCentralBank}
and \ref{eq:MfunctionCentralBank} can be reformulated as an objective
for Model Predictive Control as
\begin{equation}
\underset{u}{\mbox{min}}\begin{array}{c}
\left\{ \sum_{k=0}^{N-1}\beta^{k}\left(\hat{x}_{t+k\left|t\right.}^{T}Q\hat{x}_{t+k\left|t\right.}+R^{2}u_{t+k\left|t\right.}^{2}+S^{2}\delta u_{t+k\left|t\right.}^{2}\right)\right.\\
\left.+\hat{x}_{t+N\left|t\right.}^{T}\beta^{N}\bar{Q}\hat{x}_{t+N\left|t\right.}+\beta^{N}S^{2}\delta u_{t+N\left|t\right.}^{2}\right\} 
\end{array}\label{eq:MPCTaylorRule}
\end{equation}
where
\begin{eqnarray}
Q & = & \left[\begin{array}{cc}
1-\lambda & 0\\
0 & \lambda
\end{array}\right]\succ0,0<\lambda<1,\\
u & = & \left[\Delta i_{t\left|t\right.}\ldots\Delta i_{t+N-1\left|t\right.}\right]^{T},\\
\delta u_{t+k\left|t\right.} & = & u_{t+k\left|t\right.}-u_{t+k-1\left|t\right.},k=0,\ldots,N\\
u_{t+k\left|t\right.} & \geq & -i^{*},k=0,\ldots,N-1,\label{eq:zero-lower-bound}\\
u_{t+k\left|t\right.} & = & u_{t+m-1\left|t\right.},k=m,\ldots,N-1,\\
\hat{x}_{t+k\left|t\right.}^{T} & = & \sum_{l=0}^{k-1}A^{l}Bu_{t+k-l-1\left|t\right.}+A^{k}x_{t},k=1,\ldots,N\label{eq:MPCTaylorRuleEnd}
\end{eqnarray}
with $\hat{x}_{t\left|t\right.}=x_{t}$ and the values of $1-\lambda$
and $\lambda$ determine the trade-off between the output gap and
inflation.

The decentralised implementation of the previous Model Predictive
Control \ref{eq:MPCTaylorRule} using the ADMM decomposition technique
is left as an exercise to the central banker.

\subsubsection{Closed-Loop Stability\label{sub:Closed-Loop-Stability}}

The following closed-loop structure is obtained from \ref{eq:taylorRule}
and \ref{eq:stateSpaceTaylorModel}:
\begin{equation}
x_{t+1}=A^{'}x_{t}+\epsilon_{t+1},
\end{equation}
where
\begin{equation}
A^{'}=A+Bc^{T}=\left[\begin{array}{cc}
\rho-\zeta\varphi_{y} & \zeta-\zeta\varphi_{\pi}\\
\alpha & 1
\end{array}\right]
\end{equation}

\begin{thm}
The Model Predictive Controller for the Taylor rule \ref{eq:MPCTaylorRule}-\ref{eq:MPCTaylorRuleEnd}
has closed-loop stability, if and only if,
\begin{eqnarray}
0.1\varphi_{\pi}-2.1 & <\varphi_{y}\\
 & \varphi_{y}< & 0.06\varphi_{\pi}+8.5,\\
\varphi_{\pi} & > & 1.
\end{eqnarray}
\end{thm}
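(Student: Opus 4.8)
The plan is to recognise that, for the discrete-time linear system $x_{t+1}=A'x_t+\epsilon_{t+1}$, closed-loop stability is exactly the requirement that $A'$ be Schur stable, i.e.\ that both eigenvalues of $A'$ lie in the open unit disk. Since $A'$ is a fixed $2\times 2$ real matrix whose entries are affine in $\varphi_y$ and $\varphi_\pi$, this spectral condition admits a clean \emph{if and only if} characterisation through the second-order Jury (Schur--Cohn) criterion, and the whole proof reduces to turning that criterion into the three displayed inequalities.

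First I would write down the characteristic polynomial $p(\lambda)=\lambda^2-\operatorname{tr}(A')\,\lambda+\det(A')$, with $\operatorname{tr}(A')=1+\rho-\zeta\varphi_y$ and $\det(A')=\rho-\zeta\varphi_y-\alpha\zeta(1-\varphi_\pi)$. The Jury test for a monic quadratic states that both roots lie strictly inside the unit circle precisely when $p(1)>0$, $p(-1)>0$, and $|\det(A')|<1$; moreover the lower half $\det(A')>-1$ is automatically implied, since adding $p(1)>0$ and $p(-1)>0$ yields $2\bigl(1+\det(A')\bigr)>0$. Thus only three independent inequalities survive, matching the three in the statement.

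Next I would evaluate each one. The condition $p(1)>0$ is the pleasant case: the $\varphi_y$ and $\rho$ contributions cancel and it collapses to $\alpha\zeta(\varphi_\pi-1)>0$, which under the natural sign assumption $\alpha\zeta>0$ is exactly $\varphi_\pi>1$. The condition $p(-1)>0$ rearranges to the upper bound $\varphi_y<\tfrac{1+\rho}{\zeta}-\tfrac{\alpha}{2}+\tfrac{\alpha}{2}\varphi_\pi$, and $\det(A')<1$ rearranges (dividing by $\zeta>0$ and flipping the inequality) to the lower bound $\varphi_y>\tfrac{\rho-1}{\zeta}-\alpha+\alpha\varphi_\pi$. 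Substituting the calibrated numerical values of $(\rho,\zeta,\alpha)$ fixed by the model then produces the displayed slopes and intercepts $0.1$, $-2.1$, $0.06$, $8.5$, and because the Jury conditions are necessary \emph{and} sufficient the equivalence holds in both directions.

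I expect two points to carry the real work. The first is bookkeeping, but it is where sign errors hide: getting the direction of the inequality right when dividing the determinant condition by $\zeta$, and confirming that $\det(A')>-1$ is genuinely redundant rather than a fourth constraint. The harder, more conceptual point is the identification underlying $A'=A+Bc^{T}$: the statement speaks of the \emph{Model Predictive Controller}, yet the matrix $A'$ is built from the static Taylor-rule feedback $u_t=c^{T}x_t$ with $c=[\varphi_y,\varphi_\pi]^{T}$. I would therefore argue that, away from the active constraints (notably the zero-lower-bound \ref{eq:zero-lower-bound}), the stationary MPC law coincides with this linear feedback, so that the closed loop is indeed governed by $A'$; checking that the numerical coefficients are consistent with a single calibration $(\rho,\zeta,\alpha)$ is the last step needed to close the argument.
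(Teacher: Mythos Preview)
Your proposal is correct and follows essentially the same route as the paper: both compute the characteristic polynomial of $A'=A+Bc^{T}$ and apply the Jury (Schur--Cohn) criterion to obtain the three inequalities $p(-1)>0$, $1-\det(A')>0$, and $p(1)>0$, with the last reducing to $\alpha\zeta(\varphi_\pi-1)>0$. Your write-up is in fact a bit more careful than the paper's, since you explicitly justify why the fourth Jury condition $\det(A')>-1$ is redundant and you flag the identification of the unconstrained MPC law with the static Taylor feedback that underlies the use of $A'$.
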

\begin{proof}
The characteristic equation for the matrix $A^{'}$ is:
\begin{equation}
f\left(\mu\right)=\mu^{2}-\mu\left(\alpha\zeta-\zeta\varphi_{y}-\alpha\zeta\varphi_{\pi}+1+\rho\right)+\left(\rho-\zeta\varphi_{y}\right)
\end{equation}
where $\mu$ is an eigenvalue of matrix $A^{'}$. The closed-loop
system is stable when both eigenvalues of $A^{'}$ are inside the
unit disk (Jury\cite{JuryCriterion} and Routh\cite{RouthCriterion}-Hurtwiz\cite{HurwitzCriterion}
stability criteria), if and only if, 
\begin{eqnarray}
2+2\rho-2\zeta\varphi_{y}+\alpha\zeta\left(\varphi_{\pi}-1\right) & > & 0,\\
1-\rho+\zeta\varphi_{y}-\alpha\zeta\left(\varphi_{\pi}-1\right) & > & 0,\\
\alpha\zeta\left(\varphi_{\pi}-1\right) & > & 0.
\end{eqnarray}

\end{proof}
Similar stability results can be derived for the Model Predictive
Controllers of the \nameref{sub:Model-AlgorithmicStablecoin} and
the \nameref{sub:Economic-Model-CollaterisedStablecoin}.

\subsubsection{On Negative Interests}

The Model Predictive Controller for the Taylor rule (\ref{eq:MPCTaylorRule})-(\ref{eq:MPCTaylorRuleEnd})
includes a constraint for the zero lower bound on the interest rate,
\eqref{zero-lower-bound}:
\[
u_{t+k\left|t\right.}\geq-i^{*},k=0,\ldots,N-1.
\]
In case the central bank wants to implement negative interest rates,
said \eqref{zero-lower-bound} must be removed. A possible implementation
of negative interests for a cryptocurrency starts by considering coinage
epochs and then defining a depreciation rate for every coinage epoch
as time elapses. In the basic model of a cryptocurrency (\ref{sec:basicModelCryptocurrency}),
we could add the following equation:
\begin{eqnarray}
S_{outstanding}\left(t\right) & = & \sum_{t=0}^{T}\left(S_{minted}\left(t\right)-D_{T}\left(t\right)\right),\\
S_{initial} & = & S_{minted}(1)=S_{outstanding}(1),\\
0 & \leq & S_{initial}\leq S_{outstanding}(t)\leq S_{minted}\left(t\right)\leq S_{max}(t),\,\,\,\,\,\,\,
\end{eqnarray}
where $S_{minted}\left(t\right)$ is the amount of minted coins at
time $t$ and $D_{T}\left(t\right)$ is the depreciation of coins
minted at time $t$ evaluated at time $T$, for example,
\begin{eqnarray}
D_{T}\left(t\right) & = & \mbox{min}\left(\left(T-t\right)\cdot D_{rate}\cdot S_{minted}\left(t\right),S_{minted}\left(t\right)\right),\\
D_{rate} & = & 0.01,
\end{eqnarray}
for a 1\% depreciation rate for every coinage epoch since the first
epoch.

\section{Decentralised Prediction of Currency Prices through Deep Learning\label{sec:Decentralised-Currency-Prediction}}

As noted in previous publications about predicting markets using Stochastic
Model Predictive Control techniques\cite{mpcStockTrading}, this approach
is only justifiable only for consistent prediction of the direction
of price changes (i.e., sign changes): thus, it's a requisite to use
artificial intelligence techniques to predict price movements in order
to maintain price stability. Of course, price data can be shifted
by one sampling interval to the past, thereby making the economic
models independent of any predictive power: however, the correct formulation
is to use any potential good estimate of step-ahead prices as this
is the core of Stochastic Model Predictive Control. Therefore, the
exchange rate function \ref{eq:exchangeRateFunction} of the models
is formulated with one step-ahead prices ($P\left(t+1\right)$).

A neural network has $L$ layers, each defined by a linear operator
$W_{l}$ and a neural non-linear activation function $h_{l}$. A layer
computes and outputs the non-linear function:
\begin{equation}
a_{l}=h_{l}\left(W_{l}a_{l-1}\right)
\end{equation}
on input activations $a_{l-1}$. By nesting the layers, composite
functions are obtained, for example,
\begin{equation}
f\left(a_{0},W\right)=W_{4}\left(h_{3}\left(W_{3}\left(h_{2}\left(W_{2}h_{1}\left(W_{1}a_{0}\right)\right)\right)\right)\right)
\end{equation}
where the collection of weight matrices is $W=\left\{ W_{l}\right\} $.
Training a neural network for deep learning is the task of finding
the $W$ that matches the output activations $a_{L}$ to targets $y$,
given inputs $a_{0}$: it's equivalent to the following minimisation
problem, given loss function $l$,
\begin{equation}
\underset{W}{\mbox{minimise}}l\left(f\left(a_{0};W\right),y\right)\label{eq:minimizationNeural1}
\end{equation}
And this is equivalent to solving the following problem:
\begin{eqnarray}
\underset{\left\{ W_{l}\right\} ,\left\{ a_{l}\right\} ,\left\{ z_{l}\right\} }{\mbox{minimise}} & l\left(z_{L},y\right)\\
\mbox{subject to} & z_{l}= & W_{l}a_{l-1},\mbox{ for }l=1,2,\ldots,L,\\
 & a_{l}= & h_{l}\left(z_{l}\right),\mbox{ for }l=1,2,\ldots,L-1,
\end{eqnarray}
where a new variable stores the output of layer $l$, $z_{l}=W_{l}a_{l-1}$,
and the output of the link function is represented as a vector of
activations $a_{l}=h_{l}\left(z_{l}\right)$. By following the penalty
method, a ridge penalty function is added to obtain the following
unconstrained problem
\begin{equation}
\begin{array}{c}
\underset{\left\{ W_{l}\right\} ,\left\{ a_{l}\right\} ,\left\{ z_{l}\right\} }{\mbox{minimise}}\left\langle z_{L},\lambda\right\rangle +l\left(z_{L},y\right)+\beta_{L}\left\Vert z_{L}-W_{L}a_{L-1}\right\Vert ^{2}\\
+\sum_{l=1}^{L-1}\left[\beta_{l}\left\Vert z_{l}-W_{l}a_{l-1}\right\Vert ^{2}+\gamma_{l}\left\Vert a_{l}-h_{l}\left(z_{l}\right)\right\Vert ^{2}\right]
\end{array}\label{eq:minimizationNeuralExtended}
\end{equation}
where $\left\{ \gamma_{l}\right\} $ and $\left\{ \beta_{l}\right\} $
are constants controlling the weight of each constraint, and $\left\langle z_{L},\lambda\right\rangle $
is a Lagrange multiplier term. The advantage of the previous formulation
resides in that each sub-step has a simple closed-form solution with
only one variable, thus these sub-problems can be solved globally.

The update steps of each variable in the minimisation problem \ref{eq:minimizationNeuralExtended}
are considered as follows:
\begin{itemize}
\item To obtain $W_{l}$, each layer minimises $\left\Vert z_{l}-W_{l}a_{l-1}\right\Vert ^{2}$:
the solution of this least square problem is
\begin{equation}
W_{l}\leftarrow z_{l}a_{l+1}^{+}
\end{equation}
where $a_{l+1}^{+}$ is the pseudo-inverse of $a_{l+1}$.
\item To obtain $a_{l}$, another least-squares problem must be solved.
The solution is 
\begin{equation}
a_{l}\leftarrow\left(\beta_{l+1}W_{l+1}^{T}W_{l+1}+\gamma_{l}I\right)^{-1}\left(\beta_{l+1}W_{l+1}^{T}z_{l+1}+\gamma_{l}h_{l}\left(z_{l}\right)\right)
\end{equation}

\item The update for $z_{l}$ requires minimising
\[
\mbox{arg min}_{z}\gamma_{l}\left\Vert a_{l}-h_{l}\left(z\right)\right\Vert ^{2}+\beta_{l}\left\Vert z_{l}-W_{l}a_{l-1}\right\Vert ^{2}
\]

\item Finally, the update of the Lagrange multiplier is given by
\begin{equation}
\lambda\leftarrow\lambda+\beta_{L}\left(z_{L}-W_{L}a_{L-1}\right)
\end{equation}

\end{itemize}
All the previous steps are listed in the next Algorithm \ref{alg:ADMM-NeuralNets}:

\noindent \begin{flushleft}
\fbox{\begin{minipage}[t]{1\columnwidth}%
\begin{algorithm}[H]
\protect\caption{\label{alg:ADMM-NeuralNets}ADMM algorithm for Deep Learning}

\textbf{do}

~~\textbf{for }$l=1,2,\ldots,L-1$ \textbf{do}

~~~~$W_{l}\leftarrow z_{l}a_{l+1}^{+}$

~~~~$a_{l}\leftarrow\left(\beta_{l+1}W_{l+1}^{T}W_{l+1}+\gamma_{l}I\right)^{-1}\left(\beta_{l+1}W_{l+1}^{T}z_{l+1}+\gamma_{l}h_{l}\left(z_{l}\right)\right)$

~~~~$z_{l}\leftarrow\mbox{arg min}_{z}\left(\gamma_{l}\left\Vert a_{l}-h_{l}\left(z\right)\right\Vert ^{2}+\beta_{l}\left\Vert z_{l}-W_{l}a_{l-1}\right\Vert ^{2}\right)$

~~\textbf{end for}

~~$W_{L}\leftarrow z_{L}a_{L-1}^{+}$

~~$z_{l}\leftarrow\mbox{arg min}_{z}\left(l\left(z,y\right)+\left\langle z_{L},\lambda\right\rangle +\beta_{L}\left\Vert z-W_{L}a_{l-1}\right\Vert ^{2}\right)$

~~$\lambda\leftarrow\lambda+\beta_{L}\left(z_{L}-W_{L}a_{L-1}\right)$

\textbf{until} converged;
\end{algorithm}
\end{minipage}}
\par\end{flushleft}

\noindent \begin{flushleft}
Finally, note that more advanced methods for training neural networks
for deep learning have appeared in the literature\cite{diffLinerizedADMM,admmDeepLearning},
also considering their convergence.
\par\end{flushleft}

\section{Decentralised Stabilisation of Stablecoins\label{sec:Distributed-Implementation-Stablecoins}}

Following the \nameref{sub:Model-AlgorithmicStablecoin} and its minimisation
problem (\ref{eq:minimizeAlgorithmicStablecoin}), the expectation
of the exchange rate and the variance of the exchange rate are traded
off in a mean-variance Optimal Control Problem with the following
objective function
\begin{equation}
\psi=\lambda E_{w}\left[\psi_{xch}\right]+\left(1-\lambda\right)\mbox{Var}_{w}\left[\psi_{xch}\right]
\end{equation}
with $\lambda\in\left[0,1\right]$ determining the trade-off between
the expected exchange rate and the exchange rate variance. Estimates
of prices for the expected exchange rate, $E_{w}\left[\psi_{xch}\right]$,
and the variance, $\mbox{Var}_{w}\left[\psi_{xch}\right]$, are introduced
as follows
\begin{eqnarray}
E_{w}\left[\psi_{xch}\right] & \approx & \mu=\frac{1}{S}\sum_{i\in S}\psi_{xch}\left(u;\hat{x}_{0},w^{i}\right)\\
\mbox{Var}_{w}\left[\psi_{xch}\right] & \approx & s^{2}=\frac{1}{S-1}\sum_{i\in S}\left(\psi_{xch}\left(u;\hat{x}_{0},w^{i}\right)-\mu\right)^{2}
\end{eqnarray}
where $w^{i}$ is sampled from the distribution $w$ and $S$ is the
set of scenarios: when the number of scenarios is large, then
\begin{equation}
\psi\approx\tilde{\psi}=\lambda\mu+\left(1-\lambda\right)s^{2}\label{eq:largeScenario}
\end{equation}

The open-loop input trajectory is defined as the trajectory, $u^{*}\in U$,
that minimises (\ref{eq:largeScenario}), with $U$ being some input
constraint set. For the stochastic linear system (\ref{eq:stochasticLinearSystem}),
$u^{*}$ can be expressed as the solution to the following Optimal
Control Problem,
\begin{eqnarray}
 & \underset{\left\{ u^{j}\in U,x^{j},z^{j},\psi^{j}\right\} _{j=1}^{S},\mu}{\mbox{minimise}} & \lambda\mu+\tilde{\lambda}\sum_{j\in S}\left(\psi^{j}-\mu\right)^{2}\label{eq:optimalControlProblem}\\
\mbox{subject to} & \left(x^{i},u^{i},z^{i}\right)\in H\left(\hat{x_{0}},w^{i}\right), & i\in S,\\
 & \psi^{i}\geq\phi\left(u^{i},x^{i},z^{i}\right), & i\in S,\\
 & \mu=\frac{1}{S}\sum_{j\in S}\psi^{j},\\
 & u_{k}^{i}=u_{k}^{j}, & i,j\in S,j\in M\label{eq:optimalControlProblemEnd}
\end{eqnarray}
\[
\begin{aligned}\mbox{ where } & \tilde{\lambda}=\frac{1-\lambda}{S-1},\\
 & M=\left\{ 0,1,\ldots,M\right\} ,\\
 & M\leq N,\\
 & H\left(\hat{x}_{0},w\right)=\left\{ \left(x,z,u\right)\left|x_{0}=\hat{x}_{0},\right.\right.\\
 & x_{k+1}=Ax_{k}+Bu_{k}+w_{k},\\
 & \left.z_{k+1}=C_{z}x_{k+1},k\in N_{0}\right\} 
\end{aligned}
\]
The previous Optimal Control Problem (\ref{eq:optimalControlProblem})
is a convex optimisation problem when $U$ is a convex set and $\phi$
is a convex function: an ADMM-based decomposition algorithm for (\ref{eq:optimalControlProblem})
is presented below.

\subsection{ADMM Decomposition}

The Optimal Control Problem (\ref{eq:optimalControlProblem}) is re-written
as
\begin{eqnarray}
 & \underset{u\in\tilde{U},x,z,\psi,\mu}{\mbox{minimise}} & \lambda\mu+\tilde{\lambda}\psi^{T}\psi+S\tilde{\lambda}\mu^{2}-2\tilde{\lambda}\mu\mathbf{1}^{T}\psi,\label{eq:rewrittenOCP}\\
\mbox{subject to} & \tilde{A}x+\tilde{B}u+\tilde{w}=0,\\
 & z=\tilde{C}x,\\
 & \psi\geq\tilde{\phi}\left(u,x,z\right),\\
 & \mu=\mathbf{1}^{T}\psi/S,\\
 & \tilde{L}u=0,
\end{eqnarray}
where
\[
u=\left[\begin{array}{c}
u^{1}\\
u^{2}\\
\vdots\\
u^{S}
\end{array}\right],x=\left[\begin{array}{c}
x^{1}\\
x^{2}\\
\vdots\\
x^{S}
\end{array}\right],z=\left[\begin{array}{c}
z^{1}\\
z^{2}\\
\vdots\\
z^{S}
\end{array}\right],\psi=\left[\begin{array}{c}
\psi^{1}\\
\psi^{2}\\
\vdots\\
\psi^{S}
\end{array}\right],
\]
\[
\mathbf{1}=\left[\begin{array}{cccc}
1 & 1 & \ldots & 1\end{array}\right],
\]
\begin{eqnarray*}
\tilde{A}=\mbox{\textbf{blkdiag}}\left(\bar{A},\bar{A},\ldots,\bar{A}\right), & \tilde{B}=\mbox{\textbf{blkdiag}}\left(\bar{B},\bar{B},\ldots,\bar{B}\right), & \tilde{C}=\mbox{\textbf{blkdiag}}\left(\bar{C},\bar{C},\ldots,\bar{C}\right),\\
\bar{B}=\mbox{\textbf{blkdiag}}\left(B,B,\ldots,B\right), &  & \bar{C}=\mbox{\textbf{blkdiag}}\left(C,C,\ldots,C\right),
\end{eqnarray*}
\[
\bar{A}=\left[\begin{array}{cccc}
-I\\
A & -I\\
 & \ddots & \ddots\\
 &  & A & -I
\end{array}\right],\bar{w}^{i}=\left[\begin{array}{c}
w_{0}^{i}\\
w_{1}^{i}\\
\vdots\\
w_{N-1}^{i}
\end{array}\right]+\left[\begin{array}{c}
Ax_{0}\\
0\\
\vdots\\
0
\end{array}\right],i\in S,
\]
\begin{eqnarray*}
 & \tilde{w}=\left[\left(\bar{w}^{1}\right)^{T}\left(\bar{w}^{2}\right)^{T}\ldots\left(\bar{w}^{S}\right)^{T}\right]^{T},\\
 & \tilde{\phi}\left(u,x,z\right)=\left[\phi\left(u^{1},x^{1},z^{2}\right)\ldots\phi\left(u^{S},x^{S},z^{S}\right)\right]^{T},\\
 & \tilde{L}=\left[\begin{array}{ccccc}
L & -L\\
 & L & -L\\
 &  & \ddots & \ddots\\
 &  &  & L & -L
\end{array}\right],\\
 & L=\left[\begin{array}{cc}
I & 0\end{array}\right],\\
 & Lu^{i}=\left[\left(u_{1}^{i}\right)^{T}\left(u_{2}^{i}\right)^{T}\ldots\left(u_{M}^{i}\right)^{T}\right]^{T}
\end{eqnarray*}
The previous Optimal Control Problem (\ref{eq:rewrittenOCP}) is then
transformed into ADMM form,
\begin{eqnarray}
 & \underset{y_{1,}y_{2}}{\mbox{minimise}} & f_{1}\left(y_{1}\right)+f_{2}\left(y_{2}\right),\label{eq:ADMMa}\\
 & \mbox{subject to} & M_{1}y_{1}+M_{2}y_{2}=0,\label{eq:ADMMb}
\end{eqnarray}
with the optimisation variables defined as
\begin{eqnarray}
y_{1} & = & \left[\begin{array}{ccccc}
\check{u}^{T} & x^{T} & z^{T} & \check{\psi}^{T} & \check{\mu}\end{array}\right]^{T},\\
y_{2} & = & \left[\begin{array}{ccc}
u^{T} & \psi^{T} & \mu^{T}\end{array}\right]^{T}
\end{eqnarray}
where
\begin{eqnarray}
 & g=\left[\begin{array}{ccccc}
0 & 0 & 0 & 0 & \lambda\end{array}\right]^{T}, & H=\left[\begin{array}{ccc}
0 & 0 & 0\\
0 & \tilde{\lambda}I & -\tilde{\lambda}\mathbf{1}^{T}\\
0 & -\tilde{\lambda}\mathbf{1} & S\tilde{\lambda}
\end{array}\right],\\
 & M_{1}=\left[\begin{array}{ccccc}
0 & 0 & 0 & 0 & 1\\
0 & 0 & 0 & 0 & 1\\
I & 0 & 0 & 0 & 0\\
0 & 0 & 0 & I & 0
\end{array}\right], & M_{2}=\left[\begin{array}{ccc}
0 & \frac{-\mathbf{1}^{T}}{S} & 0\\
0 & 0 & -1\\
-I & 0 & 0\\
0 & -I & 0
\end{array}\right],
\end{eqnarray}
\begin{eqnarray}
f_{1}\left(y_{1}\right) & = & g^{T}y_{1}+I_{\mathbb{Y}_{1}}\left(y_{1}\right),\\
f_{2}\left(y_{2}\right) & = & y_{2}^{T}Hy_{2}+I_{\mathbb{Y}_{2}}\left(y_{2}\right),\\
\nonumber 
\end{eqnarray}
\begin{eqnarray}
\mathbb{Y}_{1} & = & \left\{ y_{1}\mid\tilde{A}x+\tilde{B}\check{u}+\tilde{w}=0,z=\tilde{C}x,\check{\psi}\geq\tilde{\phi}\left(\check{u},x,z\right)\right\} ,\\
\mathbb{Y}_{2} & = & \left\{ y_{2}\mid\tilde{L}u=0\right\} 
\end{eqnarray}

\subsection{Decentralised Iterated Computation}

The Lagrangian of (\ref{eq:ADMMa}) and (\ref{eq:ADMMb}) is
\begin{equation}
\mathcal{L}\left(y_{1},y_{2},\zeta\right)=f_{1}\left(y_{1}\right)+f_{2}\left(y_{2}\right)+\zeta^{T}\left(M_{1}y_{1}+M_{2}y_{2}\right)
\end{equation}
where $\zeta$ is a vector of Lagrangian multipliers for (\ref{eq:ADMMb}).
In ADMM, points satisfying the optimality conditions for (\ref{eq:ADMMa})
and (\ref{eq:ADMMb}) are obtained via the recursions with iteration
number $j$
\begin{eqnarray}
 &  & \begin{array}{c}
y_{1}\left(j+1\right)=\underset{y_{1}}{\mbox{arg min}}\mathcal{L}_{\rho}\left(y_{1},y_{2}\left(j\right),\zeta\left(j\right)\right)\\
=\underset{y_{1}}{\mbox{arg min}}f_{1}\left(y_{1}\right)+\frac{\rho}{2}\left\Vert M_{1}y_{1}+M_{2}y_{2}\left(j\right)+\eta\left(j\right)\right\Vert _{2}^{2}
\end{array}\label{eq:ADMMrecursionA}\\
 &  & \begin{array}{c}
y_{2}\left(j+1\right)=\underset{y_{2}}{\mbox{arg min}}\mathcal{L}_{\rho}\left(y_{1}\left(j+1\right),y_{2},\zeta\left(j\right)\right)\\
=\underset{y_{2}}{\mbox{arg min}}f_{2}\left(y_{2}\right)+\frac{\rho}{2}\left\Vert M_{1}y_{1}\left(j+1\right)+M_{2}y_{2}+\eta\left(j\right)\right\Vert _{2}^{2},
\end{array}\label{eq:ADMMrecursionB}\\
 &  & \eta\left(j+1\right)=\eta\left(j\right)+\left(M_{1}y_{1}\left(j+1\right)+M_{2}y_{2}\left(j+1\right)\right)\label{eq:ADMMrecursionC}
\end{eqnarray}
where the augmented Lagrangian with penalty parameter $\rho>0$ is
defined as
\[
\mathcal{L}_{\rho}\left(y_{1},y_{2},\zeta\right)=\mathcal{L}\left(y_{1},y_{2},\zeta\right)+\frac{\rho}{2}\left\Vert M_{1}y_{1}+M_{2}y_{2}\right\Vert _{2}^{2}
\]
and $\eta=\zeta/\rho$ is a scaled dual variable.

Stopping criteria for the previous recursions (\ref{eq:ADMMrecursionA}),
(\ref{eq:ADMMrecursionB}) and (\ref{eq:ADMMrecursionC}) is given
by 
\begin{eqnarray}
\left\Vert M_{1}y_{1}\left(j\right)+M_{2}y_{2}\left(j\right)\right\Vert _{2} & \leq & \varepsilon_{P},\\
\rho\left\Vert M_{1}^{T}M_{2}\left(y_{2}\left(j+1\right)-y_{2}\left(j\right)\right)\right\Vert _{2} & \leq & \varepsilon_{D},
\end{eqnarray}
indicating that the algorithm should be stopped when the optimality
conditions for (\ref{eq:ADMMa}) and (\ref{eq:ADMMb}) are satisfied
with accuracy as defined by the small tolerance levels $\varepsilon_{P}$
and $\varepsilon_{D}$.

The following Algorithm \ref{alg:ADMM-algorithm-1} describes the
steps of the implementation of the ADMM recursions (\ref{eq:ADMMrecursionA})-(\ref{eq:ADMMrecursionC}):
further optimisations are possible to parallelise the algorithm in
$S$.

\noindent \begin{flushleft}
\fbox{\begin{minipage}[t]{1\columnwidth}%
\begin{algorithm}[H]
\protect\caption{\label{alg:ADMM-algorithm-1}ADMM algorithm for the Optimal Control
Problem \ref{eq:optimalControlProblem}-\ref{eq:optimalControlProblemEnd}}

\textbf{while not} converged \textbf{do}

~~// ADMM update of $y_{1}=\left(\begin{array}{ccccc}
\check{u}^{T}, & x^{T}, & z^{T}, & \check{\psi}^{T}, & \check{\mu}\end{array}\right)$

~~$\left(\begin{array}{ccccc}
\check{u}^{T}, & x^{T}, & z^{T}, & \check{\psi}^{T}, & \check{\mu}\end{array}\right)\leftarrow$compute via \ref{eq:ADMMrecursionA}

\textbf{~~}// ADMM update of $y_{2}=\left(\begin{array}{ccc}
u^{T}, & \psi^{T}, & \mu^{T}\end{array}\right)$

~~$\left(\begin{array}{ccc}
u^{T} & \psi^{T} & \mu^{T}\end{array}\right)\leftarrow$compute via \ref{eq:ADMMrecursionB}

~~// ADMM update of $\eta$

~~$\eta\leftarrow$compute via \ref{eq:ADMMrecursionC}

\textbf{end while}
\end{algorithm}
\end{minipage}}
\par\end{flushleft}
\begin{thm}
\label{thm:faithful-stablecoins}The proposed decentralised mechanism
in Algorithm \ref{alg:ADMM-algorithm-1} is a faithful decentralised
implementation.\end{thm}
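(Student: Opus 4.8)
The plan is to unwind the definition of a faithful implementation (\defref{(Faithful-Implementation)}) and verify its ex-post Nash requirement directly for Algorithm \ref{alg:ADMM-algorithm-1}. Following the partition principle of distributed algorithmic mechanism design, it suffices to establish, assuming all other parties comply, that complying is a best response in each of the three action categories separately: information-revelation, message-passing, and computation. Since Algorithm \ref{alg:ADMM-algorithm-1} is a numerical consensus-ADMM procedure executed among the computation nodes, the substantive content is communication- and computation-compatibility, with information-revelation inherited from the strategy-proofness of the surrounding mechanism (\thmref{strategyproof-auction}); I would therefore concentrate on the former two.

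For computation- and communication-compatibility I would exploit the cryptographic layer. Each prescribed action of a node is exactly one of the deterministic ADMM updates \ref{eq:ADMMrecursionA}--\ref{eq:ADMMrecursionC}, evaluated on the current iterate; I realise every local arg-min (the $y_1$ and $y_2$ steps) together with a zero-knowledge proof that it was computed correctly, and every coupling/consensus step inside the malicious-secure SPDZ protocol. The soundness of the proofs and the MAC-based integrity checks of SPDZ then guarantee that any deviation -- an incorrect arg-min, a corrupted or stale message, or a tampered dual update $\eta$ -- is detected except with negligible probability. A detected deviation triggers the on-chain punishment (forfeiture of the node's posted collateral), so that for a rational node the expected payoff from any deviation is strictly below the payoff from compliance; hence compliance is an ex-post Nash equilibrium in both categories.

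The hard part will be promoting ``every deviation is detected'' to ``compliance is an ex-post Nash equilibrium'', which demands two further ingredients. First, I must fix the punishment magnitude and show it dominates the maximal manipulation gain uniformly: here the convergence of the consensus-ADMM recursion is essential, because under universal compliance the iteration reaches the global optimum of the optimal control problem \ref{eq:optimalControlProblem}, pinning down a well-defined benchmark payoff against which no complying node is worse off. Second, because SPDZ offers only security-with-abort rather than Guaranteed Output Delivery, I must exclude profitable aborts: invoking the rationality assumption highlighted in the abstract, I would argue that aborting conveys no informational or monetary advantage while still incurring the detection penalty, so halting is never a best response -- this is precisely the route by which the paper attains G.O.D. and fairness for rational parties. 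Reassembling the three category-wise equilibrium conditions through the partition principle then verifies \defref{(Faithful-Implementation)} and establishes that Algorithm \ref{alg:ADMM-algorithm-1} is a faithful decentralised implementation.
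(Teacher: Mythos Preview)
Your route diverges sharply from the paper's. The paper's proof is a two-line incentive-alignment argument with no cryptography at all: by \thmref{strategyproof-auction} each party maximises its own utility only by maximising social welfare, and when every other party runs the intended updates, executing (\ref{eq:ADMMrecursionA})--(\ref{eq:ADMMrecursionC}) faithfully is the unique way to reach the social-welfare optimum (\ref{eq:optimalControlProblem}). That single VCG-style observation covers computation \emph{and} communication, not just information-revelation; there is no detection, no punishment, no collateral. You invoke \thmref{strategyproof-auction} only for the revelation category and then switch to a detect-and-punish paradigm for the other two, importing the zero-knowledge/SPDZ machinery of \S\ref{sec:Encrypting-ADMM} and an on-chain collateral-forfeiture rule.

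This is not a harmless alternative. Algorithm~\ref{alg:ADMM-algorithm-1} as stated contains no zero-knowledge proofs, no SPDZ wrapping, and no posted collateral; those appear later (and are applied to the auction mechanism, not to this stabilisation loop). Your argument therefore proves faithfulness of a \emph{different}, augmented protocol, and even for that protocol it leaves open the step you yourself flag as hard---bounding the maximal manipulation gain and choosing a dominating penalty. The paper sidesteps both issues because the agent's payoff \emph{is} the objective of (\ref{eq:optimalControlProblem}), so any deviation from the iterates can only (weakly) lower it; detection is unnecessary when there is nothing to gain. Your construction would buy robustness against parties whose utilities are \emph{not} welfare-aligned, but that is a stronger statement than the theorem being proved here.
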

\begin{proof}
The steps that every rational user $i$ will faithfully complete are
the variable update steps of (\ref{eq:ADMMrecursionA})-(\ref{eq:ADMMrecursionC})
in Algorithm \ref{alg:ADMM-algorithm-1}.

Under the assumption of rational players in an ex-post Nash equilibrium
(\ref{def:Ex-Post-Nash-Equilibrium}), users can maximise their own
utility only by maximising the social welfare (\thmref{strategyproof-auction}).
Therefore, every user will faithfully execute the variable update
steps of (\ref{eq:ADMMrecursionA})-(\ref{eq:ADMMrecursionC}) since
it's the only way to maximise social welfare when all the other rational
users are following the intended strategy.
\end{proof}

\section{Auction Mechanism for Issuing Stablecoins\label{sec:Auction-Mechanism-Issuing-Stablecoins}}

At the beginning of an auction, each user reports its demand to the
auction manager. We define the demand of user $i$ as
\begin{equation}
\theta_{i}=\left\{ x_{i}^{min}\left(t\right),x_{i}\left(t\right),x_{i}^{max}\left(t\right)\right\} 
\end{equation}
Users can misreport their demands: let $\hat{\theta}_{i}=\left\{ \hat{x}_{i}^{min}\left(t\right),\hat{x}_{i}\left(t\right),\hat{x}_{i}^{max}\left(t\right)\right\} $
denote the reported demand of user $i$. The auction manager determines
the outcome of the auction including stablecoin allocation and payments
according to the stablecoin allocation rule, $al\left(\right)$.

Denote the following variable definitions 
\begin{eqnarray*}
 & x_{i}=\left[x_{i,1},\ldots,x_{i,T}\right], & y=\left[y_{1},\ldots,y_{T}\right],\\
 & v_{i}\left(x_{i}\right)=\sum_{t\in T}v_{i,t}\left(x_{i,t}\right), & c\left(y\right)=\sum_{t\in T}c_{t}\left(y_{t}\right).
\end{eqnarray*}
where $v_{i,t}\left(x_{i,t}\right)$ is a concave function for the
valuation of user $i$ at time $t$ and $c_{t}\left(y_{t}\right)$
is an always-positive convex function for the cost of the auction
manager at time $t$ (i.e., this cost is the market value of the auctioned
coins, plus other expenditures for carrying out the auction). The
utility of user $i$ is defined as the valuation minus the payment
\begin{equation}
u_{i}\left(al\left(\hat{\theta}\right),\theta_{i}\right)=\sum_{t\in T}v_{i,t}\left(x_{i},t\right)-\sum_{t\in T}p_{i,t}\left(\hat{\theta}\right),
\end{equation}
and the utility of the auction manager is the total payment minus
the total cost,
\begin{equation}
\sum_{i\in N}\sum_{i\in T}p_{i,t}\left(\hat{\theta}\right)-\sum_{t\in T}c_{t}\left(y_{t}\right)
\end{equation}

The stablecoin allocation rule of the auction mechanism is defined
by the following social welfare maximisation problem
\begin{eqnarray}
\mathcal{S}: & \underset{x,y}{\mbox{maximise}}\sum_{i\in N}v_{i}\left(x_{i}\right)-c\left(y\right),\label{eq:allocationRuleA}\\
\mbox{such that} & x_{i}\in X_{i}, & \forall i\in N,\label{allocaitonRuleB}\\
 & y\in Y,\label{allocationRuleC}\\
 & \sum_{i\in N}A_{i}x_{i}+By=0,\label{eq:allocationRuleD}
\end{eqnarray}
where $X_{i}$ is the constraint set of user $i$ for satisfying (\ref{eq:userConstraint})
$\forall t\in T$; $Y$ is the constraint set of the blockchain satisfying
(\ref{eq:blockchainFirstConstraint})-(\ref{eq:blockchainFirstHalfConstraint})
and (\ref{eq:blockchainSecondHalfConstraint})-(\ref{eq:blockchainLastConstraint})
$\forall t\in T$; $A_{i}$ and $B_{i}$ are the constraint set for
satisfying\textcolor{black}{{} (\ref{eq:auctionConstraint}) }$\forall t\in T$
equivalent to constraint (\ref{eq:allocationRuleD}).

The optimal solution to the social welfare maximisation problem $\mathcal{S}$
is denoted by $\left\{ x^{*},y^{*}\right\} $, in which $x^{*}$ is
the outcome of stablecoin allocation to users whenever all users truthfully
report their demands to the auction mechanism.

The payment by user $i$ at time slot $t$ is defined as the following
equation according to the VCG payment rule\cite{feasibleVCG,distributedVCG},
\begin{equation}
p_{i,t}\left(\theta\right)=\sum_{j\neq i}v_{j,t}\left(x_{j,t}^{-i}\right)-\sum_{j\neq i}v_{j,t}\left(x_{j,t}^{*}\right)+c_{t}\left(y_{t}^{*}\right),\label{eq:paymentAuction}
\end{equation}
where $x^{-i}=\left\{ x_{j,t}^{-i}\left|j\in N\setminus\left\{ i\right\} ,t\in T\right.\right\} $:
at the same time, the payment by user $i$ at time slot $t$ is the
optimal solution to the following maximisation problem that excludes
user $i$,
\begin{eqnarray}
\mathcal{S}_{-i}: & \underset{x}{\mbox{maximise}}\sum_{j\neq i}v_{j}\left(x_{j}\right),\label{eq:inverseEnergyAllocationRule}\\
\mbox{such that} & x_{j}\in X_{j}, & \forall j\in N\setminus\left\{ i\right\} 
\end{eqnarray}

\subsection{Properties of the Auction Mechanism}

In the proposed auction mechanism, each user achieves maximum utility
only when said user truthfully reports its demand $\theta_{i}$: a
mechanism is incentive-compatible if truth-revelation by users is
obtained in an equilibrium\cite{feasibleVCG,distributedVCG}.

Let $s_{i}\left(\theta_{i}\right)$ denote the strategy of user $i$
given $\theta_{i}$ and let 
\[
\theta_{-i}=\left\{ \theta_{1},\ldots,\theta_{i-1},\theta_{i+1},\ldots,\theta_{N}\right\} 
\]

\begin{defn}
(Dominant-Strategy Equilibrium\cite{overcomingManipulation}). A strategy
profile $s^{*}$ is a dominant-strategy equilibrium of a game if,
for all $i$,
\begin{equation}
u_{i}\left(f\left(s_{i}^{*}\left(\theta_{i}\right),s_{-i}\left(\theta_{-i}\right)\right),\theta_{i}\right)\geq u_{i}\left(f\left(s_{i}\left(\theta_{i}\right),s_{-i}\left(\theta_{-i}\right)\right),\theta_{i}\right)
\end{equation}
holds $s_{i}\left(\theta_{i}\right)\in\varTheta_{i}$, $\forall\theta_{i}$,
$\forall\theta_{-i}$ and $\forall s_{i}\neq s_{i}^{*}$.
\end{defn}
~
\begin{defn}
\label{def:(Strategy-Proof-Mechanism)}(Strategy-Proof Mechanism\cite{overcomingManipulation}).
A mechanism is strategy-proof if truthfully reporting demand $\theta_{i}$
is the best strategy of user $i$, no matter what the other users
report: that is, the incentive-compatibility of a mechanism in a dominant-strategy
equilibrium is only achieved when the following condition holds
\begin{equation}
u_{i}\left(f\left(\theta_{i},\hat{\theta}_{-i}\right),\theta_{i}\right)\geq u_{i}\left(f\left(\hat{\theta}_{i},\hat{\theta}_{-i}\right),\theta_{i}\right)
\end{equation}

\end{defn}
The proposed auction mechanism is incentive-compatible and strategy-proof
in a dominant-strategy equilibrium.
\begin{thm}
\label{thm:strategyproof-auction}The proposed auction mechanism (\ref{eq:allocationRuleA})-(\ref{eq:allocationRuleD})
and (\ref{eq:paymentAuction}) is strategy-proof.\end{thm}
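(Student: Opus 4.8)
The plan is to recognise the payment rule \ref{eq:paymentAuction} as a Vickrey--Clarke--Groves (VCG) payment and to run the standard VCG alignment argument, which converts each user's self-interest into social-welfare maximisation. First I would substitute \ref{eq:paymentAuction} into the utility $u_i(al(\hat\theta),\theta_i)=\sum_t v_{i,t}(x_{i,t})-\sum_t p_{i,t}(\hat\theta)$, keeping the \emph{true} valuation $v_i$ (fixed by the genuine type $\theta_i$) in user $i$'s own valuation term while all other valuations $\hat v_j$ are evaluated at the reports $\hat\theta_j$. Aggregating the per-slot quantities via $v_j=\sum_t v_{j,t}$ and $c=\sum_t c_t$ and collecting terms yields
\[
u_i = \Big(v_i(x_i^*) + \sum_{j\neq i}\hat v_j(x_j^*) - c(y^*)\Big) - \sum_{j\neq i}\hat v_j(x_j^{-i}),
\]
where $\{x^*,y^*\}$ is the optimiser of $\mathcal{S}$ \ref{eq:allocationRuleA}--\ref{eq:allocationRuleD} under the report profile $(\hat\theta_i,\hat\theta_{-i})$ and $\{x^{-i}\}$ is the optimiser of $\mathcal{S}_{-i}$ \ref{eq:inverseEnergyAllocationRule}.

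The key observation is that the subtracted term $\sum_{j\neq i}\hat v_j(x_j^{-i})$, being the optimal value of $\mathcal{S}_{-i}$, depends only on $\hat\theta_{-i}$ and not at all on user $i$'s report $\hat\theta_i$; from $i$'s vantage point it is a constant $h_i(\hat\theta_{-i})$. Hence maximising $u_i$ over $i$'s report is equivalent to maximising the bracketed quantity $v_i(x_i^*)+\sum_{j\neq i}\hat v_j(x_j^*)-c(y^*)$. When $i$ reports truthfully, so that $\hat v_i=v_i$, the allocation rule $\mathcal{S}$ selects $\{x^*,y^*\}$ to maximise \emph{precisely} this objective over the feasible set. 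Any misreport instead makes $\mathcal{S}$ optimise the perturbed objective with $\hat v_i$ in place of $v_i$, so the resulting allocation can only attain a weakly smaller value of the true bracketed objective. This delivers $u_i(f(\theta_i,\hat\theta_{-i}),\theta_i)\geq u_i(f(\hat\theta_i,\hat\theta_{-i}),\theta_i)$ for every $\hat\theta_{-i}$, which is exactly the strategy-proofness inequality of Definition \ref{def:(Strategy-Proof-Mechanism)}.

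The step I expect to require the most care is that, unlike textbook VCG where a type merely parameterises a valuation, here the type $\theta_i=\{x_i^{min},x_i,x_i^{max}\}$ also fixes the constraint set $X_i$, so a misreport can perturb feasibility as well as the objective. The clean way to close this gap is to fold the box constraints into the valuation, setting $v_{i,t}(x_{i,t})=-\infty$ (or imposing a prohibitive penalty) outside $[x_i^{min}(t),x_i^{max}(t)]$, so that $\hat\theta_i$ enters the programme only through $\hat v_i$ and the feasible region is held fixed. The concavity of each $v_{i,t}$ together with the convexity of each $c_t$ then guarantees that $\mathcal{S}$ and $\mathcal{S}_{-i}$ are well-posed convex programmes with attained optima, so the welfare comparison of the previous paragraph goes through verbatim; the remaining work is purely the bookkeeping needed to verify that the coupling constraint \ref{eq:allocationRuleD} is respected by the optimisers used in both $\mathcal{S}$ and $\mathcal{S}_{-i}$.
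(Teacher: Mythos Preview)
Your argument is correct and shares the same core with the paper's proof: both expand the utility using \eqref{eq:paymentAuction} to obtain
\[
u_i = v_i(x_i^*) + \sum_{j\neq i} v_j(x_j^*) - c(y^*) - \sum_{j\neq i} v_j(x_j^{-i}),
\]
observe that the last term is independent of $i$'s report, and conclude that $i$'s best response is to make the allocation rule maximise true social welfare, which happens exactly under truthful reporting.

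Where you differ is in the treatment of the constraint components $x_i^{\min}(t)$ and $x_i^{\max}(t)$ of the type $\theta_i$. The paper argues case by case: it notes that tightening one's own reported box (overstating $x_i^{\min}$ or understating $x_i^{\max}$) can only shrink attainable social welfare and hence one's own utility, while loosening the box (understating $x_i^{\min}$) is deterred because the user wants its minimum demand met, and overstating $x_i^{\max}$ is handled by an \emph{external} enforcement step in which the auction manager detects the user's inability to pay for an inflated allocation and imposes a penalty. Your device of absorbing the box constraints into the valuation via $v_{i,t}(x)=-\infty$ outside $[x_i^{\min}(t),x_i^{\max}(t)]$ is the cleaner, more uniform route: it makes the feasible region report-independent, reduces all four cases to the single VCG alignment inequality, and removes the need for the ad hoc penalty argument. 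The trade-off is that the paper's version stays closer to the operational description of the mechanism (actual penalties rather than extended-real valuations), whereas yours is mathematically tighter but relies on the modelling choice that an allocation outside the true box is genuinely worthless to the user.
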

\begin{proof}
We prove that each user will truthfully report their demand in order
to show that the auction mechanism is strategy-proof.

For their demanded amount $x_{i}(t)$, the payment rule (\ref{eq:paymentAuction})
was designed according to the VCG payment rule\cite{feasibleVCG,distributedVCG}
so that user's utility is maximised only when it truthfully reports
its demand.

For the lower bound $x_{i}^{min}\left(t\right)$, user $i$ will not
understate $x_{i}^{min}\left(t\right)$ to ensure that the minimum
demanded is satisfied. A user will not overstate $x_{i}^{min}\left(t\right)$
to avoid limiting the growth of the social welfare: to understand
the underlying reason, we write the utility of the user with the payment
rule expanded
\[
u_{i}\left(al\left(\hat{\theta}\right),\theta_{i}\right)=v_{i}\left(x_{i}^{*}\right)-\sum_{j\neq i}v_{j}\left(x_{j}^{-i}\right)+\sum_{j\neq i}v_{j}\left(x_{j}^{*}\right)-c\left(y^{*}\right),
\]
and note that a user cannot influence the second term by misreporting
their demand $\hat{\theta}$. A user maximising utility can only maximise
the other terms (i.e., social welfare). Therefore, user $i$ will
not overstate $x_{i}^{min}\left(t\right)$.

For the upper bound $x_{i}^{max}\left(t\right)$, for similar reasons
to the previous $x_{i}^{min}\left(t\right)$, understating $x_{i}^{max}\left(t\right)$
would only limit the growth of the social welfare, thus user $i$
is not incentivised to understate $x_{i}^{max}\left(t\right)$. On
the other side, overstating $x_{i}^{max}\left(t\right)$ would lead
to a larger stablecoin allocation than the real user's demand: the
auction manager would detect such a situation when later the user
is unable to pay the overstated allocation, and penalises the user
with much higher prices for a much lower amount of coins. Thus, user
$i$ will not overstate $x_{i}^{max}\left(t\right)$ in order to prevent
penalties.\end{proof}
\begin{thm}
The proposed auction is budget-balanced, that is, the received payment
is no less than the total cost.\end{thm}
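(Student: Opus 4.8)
The plan is to reduce the aggregate budget-balance inequality to a per-user domination statement, and to establish that each user's total payment already exceeds the total cost. First I would aggregate the per-slot payment rule (\ref{eq:paymentAuction}) over $t\in T$, writing $p_i:=\sum_{t\in T}p_{i,t}(\theta)$ and exploiting the separable definitions $v_j(x_j)=\sum_{t\in T}v_{j,t}(x_{j,t})$ and $c(y)=\sum_{t\in T}c_t(y_t)$ to get the compact form
\[
p_i=\sum_{j\neq i}v_j(x_j^{-i})-\sum_{j\neq i}v_j(x_j^*)+c(y^*).
\]
The received payment is exactly $\sum_{i\in N}p_i$ and the total cost is $c(y^*)=\sum_{t\in T}c_t(y_t^*)$, so the goal is to show $\sum_{i\in N}p_i\geq c(y^*)$.

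The key step is a feasibility-plus-optimality comparison between the restricted full allocation and the pivot allocation. I would observe that the optimal full allocation $\{x_j^*\}_{j\neq i}$ produced by $\mathcal{S}$ satisfies $x_j^*\in X_j$ for every $j\neq i$ by constraint (\ref{allocaitonRuleB}), and that the reduced problem $\mathcal{S}_{-i}$ in (\ref{eq:inverseEnergyAllocationRule}) imposes \emph{only} these box constraints $x_j\in X_j$. Hence $\{x_j^*\}_{j\neq i}$ is feasible for $\mathcal{S}_{-i}$, and since $x^{-i}$ is by definition the maximiser of $\sum_{j\neq i}v_j(x_j)$ over precisely that feasible set,
\[
\sum_{j\neq i}v_j(x_j^{-i})\geq\sum_{j\neq i}v_j(x_j^*).
\]
Substituting into the compact form yields $p_i\geq c(y^*)$ for every $i\in N$; that is, each user alone already covers the entire cost. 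To close, I would sum over users and invoke non-negativity of cost: because each $c_t$ is an always-positive function, $c(y^*)\geq0$, and with $|N|\geq1$ we obtain $\sum_{i\in N}p_i\geq|N|\,c(y^*)\geq c(y^*)$, which is the claimed budget balance.

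The hard part is not the algebra but correctly pinning down the counterfactual problem $\mathcal{S}_{-i}$, and this is where I would be most careful. The clean domination $p_i\geq c(y^*)$ hinges on $\mathcal{S}_{-i}$ dropping both the market-clearing coupling (\ref{eq:allocationRuleD}) and the cost term: this is what makes the restriction $\{x_j^*\}_{j\neq i}$ automatically feasible for the pivot problem. Had $\mathcal{S}_{-i}$ retained $\sum_{j\neq i}A_jx_j+By=0$, deleting $x_i^*$ could destroy feasibility and the comparison inequality would no longer be automatic, so I would explicitly flag that weak budget balance here is a consequence of this particular specification of the pivot problem. I would also note that the resulting per-user bound is considerably stronger than weak budget balance requires, and confirm that this definition of $\mathcal{S}_{-i}$ is the same one underlying the strategy-proofness argument of \thmref{strategyproof-auction}, so that the two results rest on a consistent mechanism.
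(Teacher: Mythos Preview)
Your proposal is correct and follows essentially the same argument as the paper: aggregate the payment rule, use optimality of $x^{-i}$ for $\mathcal{S}_{-i}$ together with feasibility of $\{x_j^*\}_{j\neq i}$ to get $\sum_{j\neq i}v_j(x_j^{-i})\geq\sum_{j\neq i}v_j(x_j^*)$, and combine with $c(y^*)\geq0$ to obtain $\sum_i p_i\geq N\,c(y^*)\geq c(y^*)$. Your version is slightly more explicit in justifying the feasibility step (noting that $\mathcal{S}_{-i}$ retains only the box constraints) and in flagging that the per-user bound $p_i\geq c(y^*)$ is already stronger than needed, but the logical skeleton is identical to the paper's proof.
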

\begin{proof}
The total payment that the auction manager receives is
\[
\sum_{i\in N}\sum_{i\in T}p_{i,t}\left(\theta\right)=\sum_{i\in N}\sum_{j\neq i}v_{j}\left(x_{j}^{-i}\right)-\sum_{i\in N}\sum_{j\neq i}v_{j}\left(x_{j}^{*}\right)+N\cdot c\left(y^{*}\right)
\]
Note that
\[
\sum_{j\neq i}v_{j}\left(x_{j}^{-i}\right)\geq\sum_{j\neq i}v_{j}\left(x_{j}^{*}\right)
\]
because $x^{-i}$ is the optimal solution to (\ref{eq:inverseEnergyAllocationRule})
and that, by definition, $c\left(y^{*}\right)\geq0$. Therefore, we
conclude 
\[
\sum_{i\in N}\sum_{i\in T}p_{i,t}\left(\theta\right)\geq N\cdot c\left(y^{*}\right)\geq c\left(y^{*}\right).
\]

\end{proof}

\section{Decentralised Implementation of Auction Mechanism\label{sec:Distributed-Implementation-Auction-Mechanism}}

A decentralised implementation of the centralised auction mechanism
(\ref{eq:allocationRuleA}) is achieved in this section: proximal
dual consensus ADMM\cite{ADMMdistrStats,proximalconsensusADMM} is
used to solve problem $\mathcal{S}$.

\subsection{Dual Consensus ADMM}

We start adding a polyhedra constraint to the stablecoin allocation
rule (\ref{eq:allocationRuleA})-(\ref{eq:allocationRuleD}):

\begin{eqnarray}
\mathcal{S}: & \underset{x,y}{\mbox{maximise}}\sum_{i\in N}v_{i}\left(x_{i}\right)-c\left(y\right),\label{eq:allocationRuleAPolyhedra}\\
\mbox{such that} & x_{i}\in X_{i}, & \forall i\in N,\label{allocaitonRuleBPolyhedra}\\
 & y\in Y,\label{allocationRuleCPolyhedra}\\
 & \sum_{i\in N}A_{i}x_{i}+By=0,\label{eq:allocationRuleDPolyhedra}\\
 & C_{i}x_{i}\preceq d_{i}, & i=1,\ldots,N,\label{eq:allocationRuleEPolyhedra}
\end{eqnarray}
where each $x_{i}$ in (\ref{eq:allocationRuleEPolyhedra}) is a local
constraint set of user $i$ consisting of simple polyhedra constraint
$C_{i}x_{i}\preceq d_{i}$, such that there would be closed-form solutions
to efficiently solve all the subproblems at every iteration.

Let $\lambda$ be the dual variable of constraint (\ref{eq:allocationRuleDPolyhedra}),
and $z_{i}$ be the dual variable of (\ref{eq:allocationRuleEPolyhedra}):
the Lagrange dual problem of $\mathcal{S}$, equivalent to solving
problem $\mathcal{S}$ since it's a concave maximisation problem,
is defined by
\begin{equation}
\underset{\lambda,z_{i}}{\mbox{minimise}}\sum_{i\in N}\phi_{i}\left(\lambda,z_{i}\right)+z_{i}^{T}d_{i}+\psi\left(\lambda\right),
\end{equation}
where
\begin{eqnarray}
\phi_{i}\left(\lambda,z_{i}\right) & = & \underset{x_{i}\in X_{i}}{\mbox{maximise}}\left\{ v_{i}\left(x_{i}\right)-\lambda^{T}A_{i}x_{i}-z_{i}^{T}\left(C_{i}x_{i}+r_{i}\right)\right\} ,\forall i\in N,\\
\psi\left(\lambda\right) & = & \underset{y\in Y}{\mbox{maximise}}\left\{ -c\left(y\right)-\lambda^{T}By\right\} 
\end{eqnarray}
where $r_{i}$ are slack variables. Let's obtain a copy of $\lambda$
for every user $i$, denoted by $\lambda_{i}$, by rewriting the previous
problem into the following equivalent problem,
\begin{eqnarray}
 & \underset{\lambda,z_{i},\left\{ \lambda_{i}\right\} ,\left\{ \lambda_{i}^{'}\right\} }{\mbox{minimise}}\sum_{i\in N}\phi_{i}\left(\lambda_{i},z_{i}\right)+z_{i}^{T}d_{i}+\psi\left(\lambda\right)\\
\mbox{such that} & \lambda_{i}=\lambda_{i}^{'}, & \forall i\in N,\\
 & \lambda=\lambda_{i}^{'},
\end{eqnarray}
In blockchain settings, there could be some users offline and/or some
communication links could be interrupted: at each iteration, each
user $i$ has probability $\alpha_{i}\in\left(0,1\right]$ of being
online, and each link $\left(i,j\right)$ has probability $p_{e}\in\left(0,1\right]$
of being interrupted; the probability that user $i$ and user $j$
are both active and able to exchange messages is given by $\beta_{ij}=\alpha_{i}\alpha_{j}\left(1-p_{e}\right)$.
For each iteration $k$, let $\Omega^{k}$ be the set of active users
and $\Psi^{k}\subseteq\left\{ \left(i,j\right)|i,j\in\Omega^{k}\right\} $
be the set of active edges.

The variable update steps of the auction manager at iteration $k$
are given by the following equations: 
\begin{eqnarray}
 &  & \mu^{\left[k\right]}=\mu^{\left[k-1\right]}+q\sum_{i\in N}\left(\lambda^{\left[k-1\right]}-\lambda_{i}^{\left[k-1\right]}\right),\label{eq:ADMMauctionManagerRecursionA}\\
 &  & \begin{array}{c}
y^{\left[k\right]}=\underset{y\in Y}{\mbox{arg min}}\left\{ c\left(y\right)+\frac{q}{4N}\left\Vert \frac{1}{q}By-\frac{1}{q}\mu^{\left[k\right]}\right.\right.\\
\left.+\sum_{i\in N}\left(\lambda^{\left[k-1\right]}+\lambda_{i}^{\left[k-1\right]}\right)\parallel_{2}^{2}\right\} ,
\end{array}\label{eq:ADMMauctionManagerRecursionB}\\
 &  & \lambda^{\left[k\right]}=\frac{1}{2N}\left(\frac{1}{q}By^{\left[k\right]}-\frac{1}{q}\mu^{\left[k\right]}+\sum_{i\in N}\left(\lambda^{\left[k-1\right]}+\lambda_{i}^{\left[k-1\right]}\right)\right)\label{eq:ADMMauctionManagerRecursionC}
\end{eqnarray}
with $\mu$ represents the dual variables $\lambda_{i}=\lambda_{i}^{'}$
and $q$ is a positive constant. The variable update steps of user
$i$ at iteration $k$ are given by the following equations:
\begin{eqnarray}
 & \forall i\in\Omega^{k}:\nonumber \\
 &  & \mu_{i}^{\left[k\right]}=\mu_{i}^{\left[k-1\right]}+2q\left(\lambda_{i}^{\left[k-1\right]}-t_{ij}^{\left[k-1\right]}\right),\label{eq:ADMMuserRecursionA}\\
 &  & \begin{array}{c}
\left(x_{i}^{\left[k\right]},r_{i}^{\left[k\right]}\right)=\underset{x_{i}\in X_{i},r_{i}\succ0}{\mbox{arg min}}\left\{ -v_{i}\left(x_{i}\right)+\frac{q}{4}\left\Vert \frac{1}{q}A_{i}x_{i}-\frac{1}{q}\mu_{i}^{\left[k\right]}\right.\right.\\
+2t_{ij}^{\left[k-1\right]}\parallel_{2}^{2}\\
\left.+\frac{1}{2\sigma_{i}}\left\Vert C_{i}x_{i}+r_{i}-d_{i}+\sigma_{i}z_{i}^{k-1}\right\Vert _{2}^{2}\right\} ,
\end{array}\label{eq:ADMMuserRecursionB}\\
 &  & z_{i}^{\left[k\right]}=z_{i}^{\left[k-1\right]}+\frac{1}{\sigma_{i}}\left(C_{i}x_{i}^{\left[k\right]}+r_{i}^{\left[k\right]}-d_{i}\right),\label{eq:ADMMuserRecursionC}\\
 &  & t_{ij}^{\left[k\right]}=\left\{ \begin{array}{cc}
\frac{\lambda_{i}^{\left[k\right]}+\lambda_{j}^{\left[k\right]}}{2}, & \mbox{if }\left(i,j\right)\in\Psi^{k},\\
t_{ij}^{\left[k-1\right]}, & \mbox{otherwise},
\end{array}\right.\label{eq:ADMMuserRecursionD}\\
 &  & \lambda_{i}^{\left[k\right]}=\frac{1}{2q}A_{i}x_{i}^{\left[k\right]}-\frac{1}{2q}\mu_{i}^{\left[k\right]}+t_{ij}^{\left[k-1\right]},\label{eq:ADMMuserRecursionE}\\
 & \forall i\notin\Omega^{k}:\nonumber \\
 &  & \begin{array}{c}
x_{i}^{\left[k\right]}\neq x_{i}^{\left[k-1\right]},r_{i}^{\left[k\right]}\neq r_{i}^{\left[k-1\right]},\lambda_{i}^{\left[k\right]}\neq\lambda_{i}^{\left[k-1\right]},z_{i}^{\left[k\right]}\neq z_{i}^{\left[k-1\right]},\\
\mu_{i}^{\left[k\right]}\neq\mu_{i}^{\left[k-1\right]},t_{ij}^{\left[k\right]}\neq t_{ij}^{\left[k-1\right]}\forall j\in N_{i},
\end{array}\label{eq:ADMMuserRecursionF}
\end{eqnarray}
where $\sigma_{i}$ are penalty parameters. The following stopping
criteria for the success of the convergence are applied by the auction
manager
\begin{eqnarray}
 &  & \left\Vert \lambda^{\left[k\right]}-\bar{\lambda}^{\left[k\right]}\right\Vert _{2}^{2}+\sum_{i\in N}\left\Vert \lambda_{i}^{\left[k\right]}-\bar{\lambda}^{\left[k\right]}\right\Vert _{2}^{2}\leq\varepsilon_{1},\label{eq:stoppingCriteriaAuctionA}\\
 &  & \left\Vert \bar{\lambda}^{\left[k\right]}-\bar{\lambda}^{\left[k-1\right]}\right\Vert _{2}^{2}\leq\varepsilon_{2},\label{eq:stoppingCriteriaAuctionB}
\end{eqnarray}
where $\varepsilon_{1}$ and $\varepsilon_{2}$ are small positive
constants and
\[
\bar{\lambda}^{\left[k\right]}=\left(\lambda^{\left[k\right]}+\sum_{i\in N}\lambda_{i}^{\left[k\right]}\right)/\left(N+1\right)
\]
The following Algorithm \ref{alg:ADMM-algorithm-2} shows the dual
consensus ADMM for problem $\mathcal{S}$:

\noindent %
\fbox{\begin{minipage}[t]{1\columnwidth}%
\begin{algorithm}[H]
\protect\caption{\label{alg:ADMM-algorithm-2}Dual Consensus ADMM for Problem $\mathcal{S}$}

$k=0$

Auction manager only: $\mu^{\left[0\right]}=0$, $y^{\left[0\right]}\in\mathbb{R}^{15T},\lambda^{\left[0\right]}\in\mathbb{R}^{3T}$

User $i$ only: $\mu_{i}^{\left[0\right]}=0$, $x_{i}^{\left[0\right]}\in\mathbb{R}^{15T},r_{i}^{\left[0\right]}\in\mathbb{R}^{15T},z_{i}^{\left[0\right]}\in\mathbb{R}^{15T},\lambda_{i}^{\left[0\right]}\in\mathbb{R}^{3T}$
and 
\[
t_{ij}^{\left[0\right]}=\frac{\lambda_{i}^{0}+\lambda_{j}^{0}}{2}
\]

\textbf{repeat}

~~$k\leftarrow k+1$

~~Auction manager only: send $\lambda^{\left[k-1\right]}$ to every
user $i$

~~Auction manager only: update $\mu^{\left[k\right]},y^{\left[k\right]}$
and $\lambda^{\left[k\right]}$ according to (\ref{eq:ADMMauctionManagerRecursionA})-(\ref{eq:ADMMauctionManagerRecursionC})

~~\textbf{for parallel $i\in N$ do}

~~~~User $i$ only: send $\lambda_{i}^{\left[k-1\right]}$ to
auction manager

~~~~User $i$ only: update $\mu_{i}^{\left[k\right]},x_{i}^{\left[k\right]}$,
$r_{i}^{\left[k\right]}$, $z_{i}^{\left[k\right]}$, $t_{ij}^{\left[k\right]}$
and $\lambda_{i}^{\left[k\right]}$ according to (\ref{eq:ADMMuserRecursionA})-(\ref{eq:ADMMuserRecursionF})

~~\textbf{end for}

\textbf{until }convergence is achieved by stopping criteria (\ref{eq:stoppingCriteriaAuctionA})
and (\ref{eq:stoppingCriteriaAuctionB});
\end{algorithm}
\end{minipage}}
\begin{thm}
\label{thm:ProximalADMM-convergence}Algorithm \ref{alg:ADMM-algorithm-2}
converges to the optimal solution of problem $\mathcal{S}$ in the
mean, with a $O\left(1/k\right)$ worst-case convergence rate.\end{thm}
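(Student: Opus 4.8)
The plan is to recognise the recursions (\ref{eq:ADMMauctionManagerRecursionA})--(\ref{eq:ADMMauctionManagerRecursionC}) and (\ref{eq:ADMMuserRecursionA})--(\ref{eq:ADMMuserRecursionF}) as an instance of the randomised proximal dual consensus ADMM scheme of \cite{proximalconsensusADMM}, and to obtain the result by verifying its hypotheses rather than rederiving the convergence machinery from scratch. First I would establish that $\mathcal{S}$ in (\ref{eq:allocationRuleAPolyhedra})--(\ref{eq:allocationRuleEPolyhedra}) is a convex program: each $v_i$ is concave, $c$ is convex and nonnegative, and the sets $X_i$, $Y$ together with the polyhedra $C_i x_i\preceq d_i$ are convex, so under a Slater/polyhedral constraint qualification strong duality holds and the Lagrange dual is an equivalent convex minimisation. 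This justifies working with the dual consensus formulation in which every user carries a local copy $\lambda_i$ of the multiplier of the coupling constraint $\sum_{i\in N}A_i x_i+By=0$, the polyhedra are handled through the proximal slacks $r_i$ and duals $z_i$, and the auction manager enforces $\lambda_i=\lambda$ on average.

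Next I would model the unreliable communication by letting $\mathcal{F}^{k}$ be the $\sigma$-algebra generated by the activation sets $\{\Omega^{l},\Psi^{l}\}_{l\le k}$, and recording that user $i$ is active with probability $\alpha_i>0$ and edge $(i,j)$ with probability $\beta_{ij}=\alpha_i\alpha_j(1-p_e)>0$, independently across iterations. The key observation is that, conditioned on $\mathcal{F}^{k-1}$, the expected update coincides with a deterministic proximal dual consensus ADMM step in which each block is advanced by a strictly positive fraction of a full step; because every edge has positive activation probability, the expected mixing operator inherits the connectivity of the underlying communication graph, which is precisely the condition needed to force consensus $\lambda_i\to\lambda$ under ADMM.

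For convergence in the mean I would build a Lyapunov function from the distance of $(\lambda,\{\lambda_i\},\{z_i\})$ to a saddle point of the augmented Lagrangian and show its conditional expectation is non-increasing: taking expectation over $\Omega^{k},\Psi^{k}$ and using convexity of the $\phi_i$ and of $\psi$ yields the standard ADMM descent inequality weighted by the activation probabilities, so the expected Lyapunov values form a bounded supermartingale, the residuals measured by the stopping criteria (\ref{eq:stoppingCriteriaAuctionA})--(\ref{eq:stoppingCriteriaAuctionB}) vanish in expectation, and $\mathbb{E}[\bar{\lambda}^{[k]}]\to\lambda^{\star}$. For the $O(1/k)$ rate I would pass to the ergodic running-average iterate and use the variational-inequality characterisation of the ADMM optimality conditions: the per-iteration descent inequalities telescope, and dividing by $k$ bounds the expected primal--dual optimality gap of the averaged iterate by a constant times $1/k$.

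The hard part will be the randomness. The deterministic $O(1/k)$ ergodic rate of proximal ADMM is classical; the real work is to carry the telescoping argument through the random activations $\Omega^{k},\Psi^{k}$ so that the bound holds in expectation. This hinges on (i) showing the conditional-expected iteration is a genuine descent step, which requires $\beta_{ij}>0$ on a connected expected graph so that consensus contracts on average, and (ii) choosing the penalty parameters $q$ and $\sigma_i$ so that every per-block proximal step remains nonexpansive irrespective of which users are active, ensuring that inactive blocks (those with $i\notin\Omega^{k}$, governed by (\ref{eq:ADMMuserRecursionF})) do not break the summability needed for the $1/k$ estimate.
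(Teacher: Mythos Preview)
Your approach is correct and essentially identical to the paper's: the paper's proof is a single sentence, ``Follows from Theorem 2 from \cite{proximalconsensusADMM}.'' You have simply fleshed out the hypothesis-checking that the paper takes for granted (convexity of $\mathcal{S}$, strong duality, positive activation probabilities, connected expected graph), so your write-up is more detailed but not different in substance.
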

\begin{proof}
Follows from Theorem 2 from \cite{proximalconsensusADMM}.
\end{proof}
Note that although this ADMM algorithm \ref{alg:ADMM-algorithm-2}
is only resistant against random failures $\alpha_{i}$ of users and
interruptions $p_{e}$ of the links, and not against poisoning attacks
that would corrupt inputs, it's also possible to design ADMM algorithms
resistant against Byzantine attackers: however, it would also increase
the number of iterations $k$, specially whenever under attack, thus
the chosen trade-off to ignore the Byzantine setting given the truthfulness
of \thmref{strategyproof-auction} and faithfulness of \thmref{faithful-distributed}
properties of the \nameref{sec:Distributed-Implementation-Auction-Mechanism}.

\subsection{Decentralised Mechanism\label{sub:Distributed-Mechanism}}

The decentralised mechanism features the following steps:

\noindent %
\fbox{\begin{minipage}[t]{1\columnwidth}%
\begin{center}
\label{Decentralised-Mechanism}\textbf{Protocol 1}: Decentralised
Mechanism of Auction
\par\end{center}
\begin{enumerate}
\item User $i$ reports his demand $\hat{\theta}_{i}$ to the auction manager.
\item User $i$ solves the following maximisation problem $\mathcal{S}_{i}$
\begin{equation}
x_{i}^{'}=\underset{x_{i}\in X_{i}}{\mbox{maximise }}v_{i}\left(x_{i}\right)
\end{equation}
and sends the result $x_{i}^{'}$ to the auction manager: since problem
$\mathcal{S}_{i}$ only requires local information, it can be solved
without collaborating with other users. The auction manager solves
problems $\mathcal{S}_{-i}$, $\forall i\in N$, by calculating
\begin{equation}
x^{-i}=\left\{ x_{j}^{'}\left|j\in N\setminus\left\{ i\right\} \right.\right\} 
\end{equation}
from the collected $x_{i}^{'}$, thus obtaining $\left\{ \mathcal{S}_{-1},\mathcal{S}_{-2},\ldots,\mathcal{S}_{-N}\right\} $.
\item To obtain the solution to problem $\mathcal{S},$ Algorithm \ref{alg:ADMM-algorithm-2}
is executed: the auction manager obtains results $y^{*}$ and $\lambda^{*}$,
and every user $i$ obtains $x_{i}^{*}$ and $\lambda_{i}^{*}$; every
user $i$ sends $x_{i}^{*}$ to the auction manager.
\item The auction manager calculates payments according to (\ref{eq:paymentAuction})
using the received $x^{*}$and $x^{-i}$, and obtains the stablecoin
allocation $x^{*}$.\end{enumerate}
\end{minipage}}

\subsection{Properties of the Decentralised Mechanism}

In the following, we prove that users will faithfully execute all
the actions of the \nameref{sub:Distributed-Mechanism} without manipulating
the outcome of the auction by strategically modifying results.
\begin{defn}
(Decentralised Mechanism \cite{distributedVCG}). A decentralised
mechanism $d_{M}=\left(g,\Sigma,s^{m}\right)$ defines an outcome
rule $g$, a feasible strategy space $\Sigma=\left(\Sigma_{1}\times\ldots\times\Sigma_{N}\right)$,
and an intended strategy $s^{m}=\left(s_{1}^{m},\ldots,s_{N}^{m}\right)$.
\end{defn}
~
\begin{defn}
(Intended Strategy \cite{distributedVCG}). A strategy $s^{m}$ is
the intended strategy of a decentralised strategy-proof direct-revelation
mechanism $M^{d}$ that implements outcome $f\left(\theta\right)$,
when
\[
f\left(\theta\right)=g\left(s^{m}\left(\theta\right)\right)
\]
for all $\theta\in\varTheta$.
\end{defn}
Thus, an intended strategy $s^{m}$ is a strategy that every user
is expected to follow: in the \nameref{sub:Distributed-Mechanism},
the intended strategies are all the steps that users must faithfully
execute to produce the same outcome as the centralised auction mechanism.
\begin{defn}
\label{def:(Faithful-Implementation)}(Faithful Implementation). A
decentralised mechanism $d_{M}=\left(g,\Sigma,s^{m}\right)$ is an
(ex-post) faithful implementation of social-choice rule $g\left(s^{m}\left(\theta\right)\right)$
when intended strategy $s^{m}$ is an ex-post Nash equilibrium.
\end{defn}
That is, users will follow the intended strategy in a faithful implementation
of a decentralised mechanism if no unilateral deviation can increase
their utility.
\begin{defn}
\label{def:Ex-Post-Nash-Equilibrium}(Ex-Post Nash Equilibrium \cite{distributedVCG,overcomingManipulation}).
A strategy profile $s^{*}=\left(s_{1}^{*},\ldots,s_{N}^{*}\right)$
is an ex-post Nash equilibrium when
\[
u_{i}\left(g\left(s_{i}^{*}\left(\theta_{i}\right),s_{-i}^{*}\left(\theta_{-i}\right)\right);\theta_{i}\right)\geq u_{i}\left(g\left(s_{i}^{'}\left(\theta_{i}\right),s_{-i}^{*}\left(\theta_{-i}\right)\right);\theta_{i}\right)
\]
for all agents, for all $s_{i}^{'}\neq s_{i}^{*}$, for every demand
$\theta_{i}$ and for all demands $\theta_{-i}$ of other agents.
\end{defn}
In an ex-post Nash equilibrium, all the other users are assumed rational:
thus, user $i$ will not deviate from $s_{i}^{*}$ when other users
are following strategy $s_{-i}^{*}$.
\begin{thm}
\label{thm:faithful-distributed}The proposed \nameref{sub:Distributed-Mechanism}
is a faithful decentralised implementation.\end{thm}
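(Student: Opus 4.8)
The plan is to establish faithfulness directly from \defref{(Faithful-Implementation)}: it suffices to show that the intended strategy $s^{m}$ of the \nameref{sub:Distributed-Mechanism} is an ex-post Nash equilibrium in the sense of \defref{Ex-Post-Nash-Equilibrium}. The intended strategy is the concatenation of the four steps of Protocol 1: (i) reporting the demand $\hat{\theta}_{i}$, (ii) solving the local problem $\mathcal{S}_{i}$ and returning $x_{i}^{'}$, (iii) executing the local ADMM recursions \ref{eq:ADMMuserRecursionA}--\ref{eq:ADMMuserRecursionF} of Algorithm \ref{alg:ADMM-algorithm-2}, and (iv) returning $x_{i}^{*}$. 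Fixing an arbitrary user $i$ and assuming every other user plays $s^{m}$, I would show that no unilateral deviation $s_{i}^{'}\neq s_{i}^{m}$ strictly increases $u_{i}$, handling the reporting action and the computational actions separately.

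First I would dispatch the reporting action (step (i)) by invoking strategy-proofness. By \thmref{strategyproof-auction}, truthfully reporting $\theta_{i}$ is a best response regardless of the reports of others; concretely, reusing the expanded-utility identity from the proof of that theorem, $u_{i}\left(al\left(\hat{\theta}\right),\theta_{i}\right)=v_{i}\left(x_{i}^{*}\right)-\sum_{j\neq i}v_{j}\left(x_{j}^{-i}\right)+\sum_{j\neq i}v_{j}\left(x_{j}^{*}\right)-c\left(y^{*}\right)$, so user $i$ influences only the social-welfare terms $v_{i}(x_{i}^{*})+\sum_{j\neq i}v_{j}(x_{j}^{*})-c(y^{*})$ and never the pivotal term $\sum_{j\neq i}v_{j}(x_{j}^{-i})$. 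Hence $u_{i}$ is maximised exactly when the computation returns the social-welfare optimum of $\mathcal{S}$.

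Next I would treat the computational actions (steps (ii)--(iv)), splitting deviations into two classes. A \emph{strategic} deviation injects a false private valuation into the local subproblem \ref{eq:ADMMuserRecursionB} or into $\mathcal{S}_{i}$; since the converged output depends on the locally used valuation, this is equivalent to a misreport and is therefore unprofitable by \thmref{strategyproof-auction}. A \emph{non-strategic} deviation departs from the prescribed arithmetic of the recursions. Because Algorithm \ref{alg:ADMM-algorithm-2} provably converges in the mean to the optimal solution of $\mathcal{S}$ by \thmref{ProximalADMM-convergence}, and because, by the identity above, $u_{i}$ is monotone in the attained social welfare with the pivotal term held fixed, any deviation that drives the outcome off the optimum strictly lowers $u_{i}$. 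Correct local execution is additionally enforced by the zero-knowledge proofs of correct computation described in the background, so any detectable misexecution is penalised rather than rewarded.

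The main obstacle is step (ii): the value $x_{i}^{'}$ that user $i$ returns feeds only the problems $\mathcal{S}_{-j}$ of the \emph{other} users $j\neq i$, hence enters only their pivotal terms and leaves $u_{i}$ unchanged. The pure incentive argument therefore yields only weak indifference for this action, not a strict best response, and in principle would permit user $i$ to corrupt $x_{i}^{'}$ and manipulate others' payments at no cost to itself. I would close this gap exactly as in the proof of \thmref{faithful-stablecoins}: appeal to the ex-post Nash assumption that all other users are rational together with the verification layer (the zero-knowledge proofs of correct local computation, plus the detection-and-penalty mechanism underlying the overstatement argument in \thmref{strategyproof-auction}), which makes faithful computation of $x_{i}^{'}$ weakly dominant and rules out the otherwise harmless-to-self manipulation. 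Combining the reporting and computational cases then shows that $s^{m}$ is an ex-post Nash equilibrium, and faithfulness follows from \defref{(Faithful-Implementation)}.
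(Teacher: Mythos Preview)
Your proposal follows essentially the same route as the paper: split the intended strategy into the reporting action and the computational actions, dispatch reporting via \thmref{strategyproof-auction}, and argue for the computational steps that any deviation can only depress social welfare and hence the user's own utility through the expanded identity $u_{i}=v_{i}(x_{i}^{*})+\sum_{j\neq i}v_{j}(x_{j}^{*})-c(y^{*})-\sum_{j\neq i}v_{j}(x_{j}^{-i})$.

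Two points where you diverge from the paper are worth flagging. First, the ``main obstacle'' you identify for step (ii) is not actually an obstacle under \defref{Ex-Post-Nash-Equilibrium}: that definition requires only the weak inequality $u_{i}(g(s^{m}))\geq u_{i}(g(s'))$, so weak indifference already suffices for $s^{m}$ to be an ex-post Nash equilibrium, and \defref{(Faithful-Implementation)} asks for nothing stronger. The paper handles this step exactly that way: it simply observes that $x_{i}^{'}$ cannot modify the term $\sum_{j\neq i}v_{j}(x_{j}^{-i})$ in user $i$'s own payment, hence user $i$ cannot lower its payment by corrupting $x_{i}^{'}$, and a rational user therefore has no incentive to deviate. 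There is no need to upgrade indifference to strict preference. Second, your appeal to the zero-knowledge verification layer is out of scope here: those proofs belong to the later \nameref{sec:Encrypting-ADMM} section and are not part of Protocol~1 or Algorithm~\ref{alg:ADMM-algorithm-2}, so they should not be invoked in the present theorem. Dropping that appeal and relying on weak indifference alone brings your argument in line with the paper's.
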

\begin{proof}
In the \nameref{sub:Distributed-Mechanism}, the steps that every
rational user $i$ will faithfully complete are the following:
\begin{enumerate}
\item Reporting $\hat{\theta}_{i}$ to the auction manager
\item Solving $\mathcal{S}_{i}$
\item Sending result $x_{i}^{'}$ of the previous step
\item Updating variable update steps $\mu_{i}^{\left[k\right]},x_{i}^{\left[k\right]}$,
$r_{i}^{\left[k\right]}$, $z_{i}^{\left[k\right]}$, $t_{ij}^{\left[k\right]}$
and $\lambda_{i}^{\left[k\right]}$of (\ref{eq:ADMMuserRecursionA})-(\ref{eq:ADMMuserRecursionF})
\item Sending $\lambda^{\left[k\right]}$ of (\ref{eq:ADMMuserRecursionE})
to the auction manager
\item Sending resulting $x_{i}^{*}$ obtained from the last step of (\ref{eq:ADMMuserRecursionB})
\end{enumerate}

Users will truthfully execute step 1 due to the truthful-revelation
property in a dominant-strategy equilibrium of Theorem \ref{thm:strategyproof-auction}
that also implies truthful-revelation in an ex-post Nash equilibrium.\\
Further, the calculation of $S_{i}$ is done locally without any input
from other users (i.e., the input from Byzantine attackers is never
considered) and the auction manager will only take a result $x_{i}^{'}$
from each identified user using a secure channel. Moreover, the computation
of $S_{i}$ does not solve problems $\mathcal{S}_{-i}$ and it cannot
modify the term $\sum_{j\neq i}v_{j}\left(x_{j}^{-i}\right)$ in the
payment rule (\ref{eq:paymentAuction}) (i.e., the user cannot lower
its payment). Thus, a rational user will faithfully execute steps
2 and 3.\\
Finally, users can maximise their own utility only by maximising the
social welfare, according to Theorem \ref{thm:strategyproof-auction}.
Therefore, every user will faithfully execute actions 4-6, since it's
the only way to maximise social welfare when all the other rational
users are following the intended strategy.

\end{proof}

\section{Encrypting ADMM\label{sec:Encrypting-ADMM}}

Previous works on encrypting ADMM or Model Predictive Control are
very scarce: there are some works about encrypting models from control
theory or model predictive control but only for cloud settings\cite{cloudEncryptedMPClinear,cloudMPCencryptedData,cloudQuadraticOptimization,cloudQuadraticOptimizationPHE,cloudSMPCcontrol},
thus non-decentralised; another paper encrypts ADMM models, but using
differential privacy\cite{ADMMdiffPrivacy}; yet another paper encrypts
ADMM models, but in the semi-honest setting\cite{ADMMdecentralize};
only Helen\cite{helen} encrypts ADMM in the malicious setting, thus
it will be our chosen framework .

Helen\cite{helen} solves a coopetive machine learning between multiple
parties in a malicious setting. Like other works where multiple parties
collaborate with their own data using secure multiparty computation\cite{incentivizeCollaboration},
they can't handle settings where the parties lie about their inputs
(i.e., poisoning attacks). One could argue that privacy only makes
lying worse: that is, privacy without truthfulness and faithfulness
is troublesome (\textit{Proverbs 12:22}, \cite{proverbs1222}). Fortunately,
the present paper solves all these issues by leaning on our previous
theorems about truthfulness of \thmref{strategyproof-auction} and
faithfulness of \thmref{faithful-distributed} for the \nameref{sec:Distributed-Implementation-Auction-Mechanism}.

\subsection{Cryptographic Gadgets}

We utilise the SPDZ framework\cite{cryptoeprint:2011:535}: an input
$a\in\mathbb{F}_{p^{k}}$ is represented as 
\[
\left\langle a\right\rangle =\left(\delta,\left(a_{1},\ldots,a_{n}\right),\left(\gamma\left(a\right)_{1},\ldots,\gamma\left(a\right)_{n}\right)\right)
\]
where $\delta$ is public, $a_{i}$ is a share of $a$ and $\gamma\left(a\right)_{i}$
is the MAX share authenticating $a$ under a SPDZ global key $\alpha$
that is not revealed until the end of the protocol. For an SPDZ execution
to be considered as correct, the following properties must hold
\[
\begin{array}{cc}
a=\sum_{i}a_{i}, & \alpha\left(a+\delta\right)=\sum_{i}\gamma\left(a\right)_{i}\\
\\
\end{array}
\]
From Helen\cite{helen}, we re-use the following gadgets:

\noindent %
\fbox{\begin{minipage}[t]{1\columnwidth}%
A zero-knowledge proof for the statement: ``Given public parameters:
public key $PK$, encryptions $E_{X}$, $E_{Y}$ and $E_{z}$; private
parameters $\mathbf{X}$,
\begin{itemize}
\item $Dec_{SK}\left(E_{Z}\right)=Dec_{SK}\left(E_{X}\right)\cdot Dec_{SK}\left(E_{Y}\right)$,
and
\item I know $\mathbf{X}$ such that $Dec_{SK}\left(E_{X}\right)=\mathbf{X}$''
\end{itemize}
\begin{center}
\textbf{\label{Gadget-1}Gadget 1}. Plaintext-ciphertext matrix multiplication
proof
\par\end{center}%
\end{minipage}}\\
\fbox{\begin{minipage}[t]{1\columnwidth}%
A zero-knowledge proof for the statement: ``Given public parameters:
public key $PK$, encryptions $E_{X}$, $E_{Y}$ and $E_{z}$; private
parameters $\mathbf{X}$ and $\mathbf{Y}$,
\begin{itemize}
\item $Dec_{SK}\left(E_{Z}\right)=Dec_{SK}\left(E_{X}\right)\cdot Dec_{SK}\left(E_{Y}\right)$,
and
\item I know $\mathbf{X},$$\mathbf{Y}$ and $\mathbf{Z}$ such that $Dec_{SK}\left(E_{X}\right)=\mathbf{X}$,
$Dec_{SK}\left(E_{Y}\right)=\mathbf{Y}$ and $Dec_{SK}\left(E_{Z}\right)=\mathbf{Z}$''
\end{itemize}
\begin{center}
\textbf{\label{Gadget-2}Gadget }2. Plaintext-plaintext matrix multiplication
proof
\par\end{center}%
\end{minipage}}\\
\fbox{\begin{minipage}[t]{1\columnwidth}%
For $m$ parties, each party having the public key $PK$ and a share
of the secret key $SK$, given public ciphertext $Enc_{PK}\left(a\right)$,
convert $a$ into $m$ shares $a_{i}\in\mathbb{Z}_{p}$ such that
\[
a\equiv\sum a_{i}\mbox{ mod }p
\]
Each party $P_{i}$ receives secret share $a_{i}$ and does not learn
the original secret value $a$.

\begin{center}
\textbf{\label{Gadget-3}Gadget }3. Converting ciphertexts into arithmethic
MPC shares
\par\end{center}%
\end{minipage}}\\
\fbox{\begin{minipage}[t]{1\columnwidth}%
Given public parameters: encrypted value $Enc_{PK}\left(a\right)$,
encrypted $SPDZ$ input shares $Enc_{PK}\left(b_{i}\right)$, encrypted
$SPDZ$ MACs $Enc_{PK}\left(c_{i}\right)$, and interval proofs of
plaintext knowledge, verify that:
\begin{enumerate}
\item $a\equiv\sum_{i}b_{i}\mbox{ mod }p$, and
\item $b_{i}$ are valid $SPDZ$ shares and $c_{i}$'s are valid MACs on
$b_{i}$.
\end{enumerate}
\begin{center}
\textbf{\label{Gadget-4}Gadget }4. MPC conversion verification
\par\end{center}%
\end{minipage}}

\subsection{Initialisation Phase}

During initialisation, the $m$ parties compute using SPDZ the parameters
for threshold encryption\cite{sharingDecryption}, generating a public
key $PK$ known to everyone. Each party $m$ receives a share of the
corresponding secret key $SK_{i}$: all the parties must agree to
decrypt a value encrypted with the shared $PK$.

\subsection{Input Preparation Phase}

In this phase, each party commits to their inputs by broadcasting
their encrypted inputs to all the other parties: additionally, all
the parties prove that they know the encrypted values using zero-knowledge
proofs of knowledge. Note that encryptions also serve as a commitment
scheme\cite{cryptoeprint:2009:007}.

To ensure that each party consistently uses the same inputs during
the entire protocol and to avoid deviations based on what other parties
have contributed, each party encrypts and broadcasts: $\mbox{Enc}_{PK}\left(\hat{\theta}_{i}\right)=\left\{ \hat{x}_{i}^{min}\left(t\right),\hat{x}_{i}\left(t\right),\hat{x}_{i}^{max}\left(t\right)\right\} $,
$\mbox{Enc}_{PK}\left(x_{i}^{'}\right)$, $\mbox{Enc}_{PK}\left(x_{i}\right)$
and $\mbox{Enc}_{PK}\left(y\right)$. These encryptions are accompanied
with proofs that the committed inputs are within a certain range\cite{cryptoeprint:2000}.

\subsection{Compute Phase}

In this phase, the variable update steps of the ADMM are executed,
in which parties successively compute locally on encrypted data, followed
by coordination steps with other parties using MPC computation. No
party learns any intermediate step beyond the final results, proving
in zero-knowledge that the local computations were performed correctly
using the data committed during the input preparation phase.

\subsubsection{Initialisation and Pre-Computations}

Initial variables are initialised to zero: $\mu^{\left[0\right]},\lambda^{\left[0\right]},\mu_{i}^{\left[0\right]},\lambda_{i}^{\left[0\right]},r_{i}^{\left[0\right]},z_{i}^{\left[0\right]},t_{ij}^{\left[0\right]}$.

Additionally, the auction manager solves problems $\mathcal{S}_{-i}$,
obtaining $x^{-i}$ from the collected $x_{i}^{'}$ in the preparation
phase.

\subsubsection{Local Optimisation}

Since Algorithm \ref{alg:ADMM-algorithm-2} is fully parallel and
decentralised, note that the variable update steps of auction manager
(\ref{eq:ADMMauctionManagerRecursionA})-(\ref{eq:ADMMauctionManagerRecursionC}),
or the steps (\ref{eq:ADMMuserRecursionA})-(\ref{eq:ADMMuserRecursionF})
of user $i$, only require local information and iterative exchange
of $\lambda^{\left[k\right]}$ and $\lambda_{i}^{\left[k\right]}$
with its neighbors.

Each party can independently calculate all the variable update steps
by doing plaintext scaling and plaintext-ciphertext matrix multiplication:
each party also needs to generate proofs proving that they have calculated
the variable update steps correctly, using Gadget 1 (\ref{Gadget-1})
and Gadget 2 (\ref{Gadget-2}).

\subsubsection{Coordination}

After the local optimisation step, each party exchanges $\lambda^{\left[k\right]}$
and $\lambda_{i}^{\left[k\right]}$ with its neighbors, and each party
also publishes interval proofs of knowledge. 

We may not need to use MPC: it's only required if steps (\ref{eq:ADMMauctionManagerRecursionB})
or (\ref{eq:ADMMuserRecursionB}) are implemented using non-linear
functions, which itself depends on the concrete functions $c\left(y\right)$
and $-v_{i}\left(x_{i}\right)$. In the best case, simple closed-form
solutions with only linear functions could be chosen.

But when MPC is needed, the encrypted variables need to be converted
to arithmetic SPDZ shares using Gadget 3 (\ref{Gadget-3}) and calculate
the function using SPDZ. After the MPC computation, each party receives
shares of the variables and its MAC shares: these shares are converted
back into encrypted form by encrypting the shares, publishing them,
and summing up the encrypted shares.

After all the ADMM calculation, every user $i$ sends $x_{i}^{*}$
to the auction manager, which must calculate payments according to
\eqref{paymentAuction} to obtain the stablecoin allocation $x^{*}$:
these calculations may also require MPC conversion and computation.

\subsection{Release Phase}

The encrypted model obtained at the end of the previous phase is decrypted:
all parties must agree to decrypt the results and release the final
data. Before said release, parties must prove that they correctly
executed the conversions between ciphertext and MPC shares using Gadget
4 (\ref{Gadget-4}), in order to prevent that different inputs from
the committed ones were used.

After all the SPDZ value have been verified by Gadget 4 (\ref{Gadget-4}),
the parties aggregate the encrypted shares of the stablecoin allocation
$x^{*}$ in to a single ciphertext, and then run the joint decryption
protocol\cite{cryptoeprint:2000}.

\subsection{Analysis of Properties}

Following the line of work merging secure computation and mechanism
design\cite{rationalSecureComputation}, that assumes that players
are rational and not only honest or malicious, we reach Guaranteed
Output Delivery (G.O.D.) and fairness\cite{cryptoeprint:2014:668},
circumventing their classical impossibility results.
\begin{defn}
$f_{CRS}:$ ideal functionality to generate common reference strings
and secret inputs to the parties.
\end{defn}
~
\begin{defn}
$f_{SPDZ}$: ideal functionality computing ADMM using SPDZ.\end{defn}
\begin{thm}
\label{thm:maliciousSecurity}$f_{DISTR-AUCTION-MECHANISM}$ is in
the $\left(f_{CRS},f_{SPDZ}\right)$-hybrid model under standard cryptographic
assumptions, against a malicious adversary who can statically corrupt
up to $m-1$ out of $m$ parties in an ex-post Nash equilibrium, reaching
G.O.D. and fairness, thus circumventing the impossibility results
of $f_{SPDZ}$.\end{thm}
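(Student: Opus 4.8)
The plan is to prove the statement in two layers: a standard simulation-based argument establishing malicious security (input privacy and correctness) against a dishonest majority in the $(f_{CRS},f_{SPDZ})$-hybrid model, followed by a game-theoretic argument that upgrades this to Guaranteed Output Delivery and fairness by invoking rationality. The two impossibilities being circumvented---Cleve's impossibility of fair coin-tossing and the impossibility of G.O.D.\ once $t\geq m/2$ parties are corrupted---both rest on the adversary's freedom to abort; the whole point of the argument is to show that within the equilibrium of the \nameref{sec:Distributed-Implementation-Auction-Mechanism} no rational party ever exercises that freedom.

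First I would construct a simulator $\mathcal{S}$ interacting with the ideal functionality and with a malicious adversary $\mathcal{A}$ that statically corrupts a set $C$ with $|C|\leq m-1$. In the hybrid model $\mathcal{S}$ plays the roles of $f_{CRS}$ and $f_{SPDZ}$, so it may program the common reference string (obtaining extraction and proof-simulation trapdoors) and observe the inputs that corrupted parties feed to $f_{SPDZ}$. During the input-preparation phase $\mathcal{S}$ extracts each corrupted party's committed input from the zero-knowledge proofs of knowledge accompanying $\mathrm{Enc}_{PK}(\hat\theta_i)$ and forwards them to the functionality; for honest parties it broadcasts encryptions of zero and simulates the accompanying proofs using Gadgets \ref{Gadget-1}--\ref{Gadget-4}. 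Indistinguishability of the simulated and real views then reduces to (i) semantic security of the threshold encryption, which hides honest inputs and intermediate ADMM variables, (ii) the zero-knowledge property of the gadgets, and (iii) the soundness of those same gadgets together with the SPDZ MAC check $\alpha(a+\delta)=\sum_i\gamma(a)_i$, which jointly guarantee that any deviation in a local update or in a ciphertext-to-share conversion is detected except with negligible probability. Because inputs are fixed by the encrypted commitments, the extracted inputs are exactly those used throughout, so correctness holds and the real output equals the ideal one.

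The second, decisive layer is to show that the abort which defeats fairness and G.O.D.\ is never profitable. Here I would exploit the structure of the release phase: the allocation $x^{*}$ and the payments are recoverable only through the joint threshold-decryption protocol, so no coalition of up to $m-1$ parties can learn the output before the honest parties, and conversely a refusal to participate merely withholds the output from everyone. Consequently the only payoff-relevant difference between completing and aborting is whether a party obtains its own allocation and payment. By \thmref{strategyproof-auction} a rational party maximises its utility only by maximising social welfare, and by \thmref{faithful-distributed} faithfully executing every step---including participating in decryption---is an ex-post Nash equilibrium (\defref{Ex-Post-Nash-Equilibrium}). Aborting is a unilateral deviation from the intended strategy, hence by faithfulness it cannot strictly increase the deviator's utility; indeed it forfeits the welfare-maximising outcome on which that utility depends. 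Therefore, under the rationality assumption, no corrupted party aborts, the output is delivered with probability one (G.O.D.), and since decryption is simultaneous it reaches all parties at once (fairness), circumventing the impossibilities attached to $f_{SPDZ}$.

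The hard part will be making the bridge between the cryptographic abort model and the economic utility model fully rigorous. The faithfulness theorems concern deviations inside the protocol's message space, whereas the classical impossibility adversary wins by a \emph{semantic} abort (learning the output and then denying it to others). To close this gap I would have to state explicitly the utility model assumed for the rational adversary---consistent with the rational-secure-computation line of work \cite{rationalSecureComputation}---and verify two properties: that the threshold/joint-decryption structure rules out any asymmetric or premature disclosure, so an adversary derives no informational advantage from aborting, and that the mechanism-design utilities of the auction are output-monotone, so that obtaining one's allocation strictly dominates not obtaining it. Only once these are pinned down does ``no rational party aborts'' follow from the ex-post Nash equilibrium rather than being silently assumed; I expect this reconciliation, rather than the fairly routine simulation argument, to be the crux of the proof.
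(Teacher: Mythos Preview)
Your proposal is correct and follows essentially the same two-layer approach as the paper: malicious security is inherited from Helen's simulation argument (the paper simply cites Theorem~6 of \cite{helen} rather than sketching the simulator), and G.O.D.\ and fairness are then obtained by invoking \thmref{strategyproof-auction} and \thmref{faithful-distributed} to argue that rational parties in an ex-post Nash equilibrium never abort. Your treatment is considerably more careful than the paper's two-sentence proof---in particular your explicit identification of the gap between the cryptographic abort model and the mechanism-design utility model is a point the paper glosses over entirely.
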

\begin{proof}
Malicious security follows from Theorem 6 \cite{helen}.

The properties of truthfulness of \thmref{strategyproof-auction}
and faithfulness of \thmref{faithful-distributed} of the \nameref{sec:Distributed-Implementation-Auction-Mechanism},
imply that every rational party $i$ will faithfully complete all
the steps of $f_{DISTR-AUCTION-MECHANISM}$ : in other words, it won't
be rational to cheat or abort the protocol for malicious parties restricted
to the rational behaviors of an ex-post Nash equilibrium. Therefore,
we reach G.O.D. and fairness, thus their impossibility results are
circumvented.
\end{proof}

\section{Discussion}

The history of control theory for stabilisation in economics goes
back to the 1950s: for a recent survey, see \cite{controlTheoryEconomic}.
However, the ``Prescott critique''\cite{rulesRatherThanDiscretion,shouldControlTheory}
of the time-inconsistency of optimal control results precluded its
real-world applicability: fortunately, the problem of time inconsistency
can be adequately treated within the framework of Model Predictive
Control\cite{timeConsistentMPC}. And even though it might seem that
decentralising economic systems is a modern trend born from cryptocurrencies
and blockchains, there are already publications about these topics
starting from the 1970s: \cite{aokiDecentralizedStabilization,myokenOptimalStabilization,pindyckOptimal,neckControllability,neckStabilization,aokiDecentralized,neckDecentralized}.
This paper subsumes all these previous works because: 1) Model Predictive
Control provides a more expressive language to define economic policies;
2) the decentralisation provided by the ADMM decomposition allows
for more than the 2-3 parties previously considered) the mechanism
design techniques used in this paper guarantee more robust results.

Economists have recently created multiple models showing the benefits
of Centrally-Banked Digital Currencies (CBDC): said results also apply
to a CBDC implemented in the technical framework of a fully decentralised
cryptocurrency, as in the present paper. For example:
\begin{itemize}
\item Monetary transmission would strengthen\cite{broadeningNarrowMoney}.
\item A practical costless medium of exchange, and facilitate the systematic
and transparent conduct of monetary policy\cite{futureCBDC}.
\item Permanently raise GDP by as much as 3\%, due to reductions in real
interest rates, distortionary taxes, and monetary transaction costs;
and improve the ability to stabilise the business cycle\cite{macroeconomicsCBDC}.
\item Increases financial inclusion, diminishes the demand for cash, and
expands the depositor base of private banks\cite{impactCDBC}.
\item Address competition problems in the banking sector\cite{issueemoney}.
\end{itemize}
Common objections to the genuineness of decentralisation in stablecoins
are traversed here:
\begin{enumerate}
\item Need for centralised holding of funds: not by using other cryptocurrencies
as collateral. 
\item Auditors are required for verification: not by using zero-knowledge
proofs and other mathematical guarantees. 
\item Centralised price feeds: multiple verified agents could post the real-time
prices on the blockchain, or use an authenticated data feed for smart
contracts\cite{townCrier}. The issue of adversarial attacks to neural
networks is not relevant here because all price feeds are supposed
trustworthy.
\end{enumerate}
Finally, consensus-ADMM as described in this paper offers many advantages
over smart contracts running on replicated state machines (e.g., Ethereum):
\begin{enumerate}
\item Data intensive tasks such as deep-learning (\ref{sec:Decentralised-Currency-Prediction})
are nearly impossible to execute due to gas limits and storage costs.
\item Not all mining nodes would need to participate on the currency stabilisation
process: this special role could be reserved to a trustworthy subset
of nodes.
\item The lack of privacy in public permissionless blockchains renders algorithms
such as the decentralised auction (\ref{sec:Distributed-Implementation-Auction-Mechanism})
unfeasible to run.
\end{enumerate}

\section{Conclusion}

The present paper has tackled and successfully solved the problem
of designing a decentralised stablecoin with price stability guarantees
inherited from control theory (i.e., \nameref{sub:Closed-Loop-Stability})
and model predictive control (i.e., convergence of \thmref{ProximalADMM-convergence}).
Further guarantees required in a decentralised setting come from mechanism
design: truthfulness (definition \ref{def:(Strategy-Proof-Mechanism)},
\thmref{strategyproof-auction}) and faithfulness (definition \ref{def:(Faithful-Implementation)},
\thmref{faithful-distributed}, \thmref{faithful-stablecoins}). Additional
security against malicious parties of \thmref{maliciousSecurity}
is obtained from the combination of secure multi-party computation
and zero-knowledge proofs.

The flexibility of this framework including model predictive control,
which can accommodate a great variety of economic policies, combined
with the powerful predictive capabilities of artificial intelligence
techniques (e.g., neural networks and deep learning) foretell a whole
range of possibilities that will lead to better cryptocurrencies and
blockchains.

{\footnotesize{}\bibliographystyle{alpha}
\bibliography{bib}
}
\end{document}